\newtheorem{lemma}{Lemma}
\newtheorem{proposition}{Proposition}
\newtheorem{remark}{Remark}
\def\BibTeX{{\rm B\kern-.05em{\sc i\kern-.025em b}\kern-.08em
		T\kern-.1667em\lower.7ex\hbox{E}\kern-.125emX}}
\begin{document}
	\title{From Target Tracking to Targeting Track — Part III: Stochastic Process Modeling and Online Learning}
	\author{Tiancheng Li, \textit{IEEE Senior Member}, Jingyuan Wang, Guchong Li, Dengwei Gao %
		\thanks{Manuscript created Feb 2025; \\
			This work was supported in part by the National Natural Science Foundation of China under Grants 62422117 and 62201316 
			and in part by the Fundamental Research Funds for the Central Universities. 
			\\
			Tiancheng Li, Jingyuan Wang and Guchong Li are with the Key Laboratory of Information Fusion Technology (Ministry of Education), School of Automation, Northwestern Polytechnical University, Xi’an 710129, China, E-mail: t.c.li@nwpu.edu.cn, jy\_wang@mail.nwpu.edu.cn, guchong.li@nwpu.edu.cn. 
			Dengwei Gao is with No. 203 Research Institute of China Ordnance Industries, Xi'an 710065, China, E-mail: gaodengwei123@163.com. 
	}}
	
	\markboth{Journal of \LaTeX\ Class Files,~Vol.~, No.~, Feb~2025}%
	{How to Use the IEEEtran \LaTeX \ Templates}

	\maketitle
	
	\begin{abstract}
		This is the third part of a series of studies that model the target trajectory, which describes the target state evolution over continuous time, as a sample path of a stochastic process (SP). By adopting a deterministic-stochastic decomposition framework, we decompose the learning of the trajectory SP into two sequential stages: the first fits the deterministic trend of the trajectory using a curve function of time, while the second estimates the residual stochastic component through parametric learning of either a Gaussian process (GP) or Student's-$t$ process (StP). 
		This leads to a Markov-free data-driven tracking approach that produces the continuous-time trajectory with minimal prior knowledge of the target dynamics. Notably, our approach explicitly models both the temporal correlations of the state sequence and of measurement noises through the SP framework. 
		It does not only take advantage of the smooth trend of the target 
		but also makes use of the long-term temporal correlation of both the data noise and the model fitting error. 
		Simulations in four maneuvering target tracking scenarios have demonstrated its effectiveness and superiority in comparison with existing approaches. 
		
		
	\end{abstract}
	
	\begin{IEEEkeywords}
		Gaussian process, Student's-$t$ process, trajectory function of time, maneuvering target tracking
	\end{IEEEkeywords}

	\IEEEpeerreviewmaketitle

	\section{Introduction}
	\label{sec:Introduction}
	
	\IEEEPARstart{T}{arget} tracking that involves the online estimation of the trajectory of a target has been a long-standing research topic and plays a significant role in aerospace, traffic, defense, robotics, etc. \cite{bar2004estimation} In essence, target tracking is more about estimating the continuous-time trajectory of the target rather than merely a finite number of point states. The continuous-time trajectory enables the acquisition of a point estimate of the state at any time in the trajectory period. However, the converse is not true. 
	Different from the classic \textit{target tracking} framework that estimates the point state $\mathbf{x}_k$ of the target on discrete measurement times $k \in \mathbb{N}^+$, 
	we solve the target tracking problem through modeling the target trajectory as a function of time (T-FoT). 
	That is, the evolution of the target state over time is modeled by T-FoT $f:\mathbb{R}^+ \rightarrow \mathcal{X}$, defined in spatio-temporal space, where $\mathcal{X}$ denotes the state space. 
	The target state at any time $t\in \mathbb{R}^+$ (not only at the measurement time instants) is given by  
	\begin{equation} \label{eq:T-FoT}
		{\mathbf{x}_t} = f(t).
	\end{equation}
	
	More attempts on continuous-time trajectory modeling can be found in the cutting-edge survey 
	\cite{Talbot2024continuoustimestate}. However, most of these approaches assume no or only short-term temporal correlation of the target state and of the measurement 
	and do not provide uncertainty about the trajectory estimate. 
	To take advantage of the latent long-term temporal correlation of the target state and of the measurement 
	in time series
	and to provide an assessment of the uncertainty associated with the estimate of the T-FoT, within our series of companion papers \cite{Li25TFoT-part1,Li25TFoT-part2} %
	including this one, we further model 
	the collection of target states as a (continuous-time) stochastic process (SP) $\mathcal{SP}_{x} \triangleq \{\mathbf{x}_t: t\in \mathbb{R}^+ \} $. 
	Therefore, any T-FoT is a sample path of this trajectory SP (TSP), that is, $f(t)\sim \mathcal{SP}_{x} $. 
	Based on the deterministic-stochastic decomposition (DSD) approach rooted in Wold and Cram\'er's decomposition theorem \cite{Wold1938,Cramer1961,Box1994}, the SP can be decomposed into a deterministic FoT and the residual SP (RSP), that is, 
	\begin{equation}
		f(t)=F(t;\mathbf{C})+\epsilon(t) , \label{eq:cramer}
	\end{equation}
	where the deterministic FoT $F(\cdot; \mathbf{C})$ is specified by parameters $\mathbf{C}$ and represents the trend of the SP $\mathcal{SP}_{x} $ and $\epsilon(\cdot)$ follows an RSP, denoted as $\mathcal{SP}_\epsilon$. 
	
	Note that $\epsilon(t)$ was interpreted as the fitting error of $F(t; \mathbf{C})$ to $f(t)$ in \cite{li2018joint,Li23Patent,li2023target} but no information is provided on the uncertainty of the estimate therein. Moreover, the temporal correlation between 
	the measurement noises in time series is ignored. 
	This paper, together with the companion paper \cite{Li25TFoT-part2} that focused on the trajectory trend FoT $F(\cdot; \mathbf{C})$, will overcome these issues. 
	This paper serves as the third part of this series and offers solutions for learning the RSP $\mathcal{SP}_\epsilon$ 
	for which two specific representative SPs are considered, respectively: the Gaussian process (GP) and Student's $t$ process (StP). 

	\begin{figure}[!htbp]
		\centering
		\includegraphics[width=0.47\textwidth]{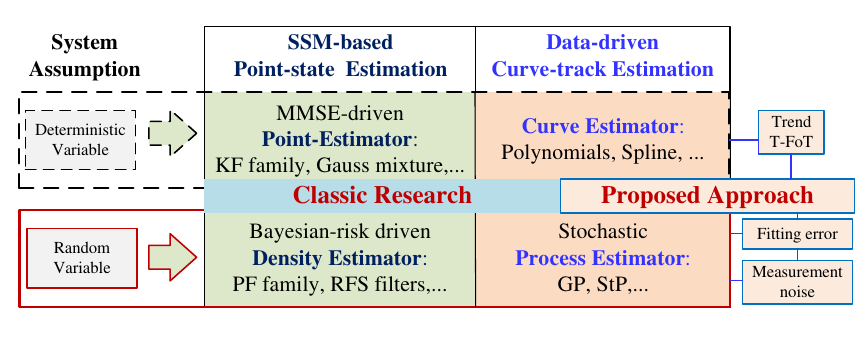}
		\caption{Comparison of our proposed approach with classic research}
		\label{fig:Innovative}
	\end{figure}

	\subsection{Relevant Work}
	\label{Literature Study}
	The classic target tracking approaches are based on the state space model (SSM) which provides a recursive filtering framework for estimation, based on which many minimum mean square error (MMSE)-based point estimators and Bayesian density estimators have been developed \cite{bar2004estimation,Sarkka13book}. In contrast to this, data-driven methods model the trajectory curve by a polynomial, splines, etc. since the pioneering work \cite{Rudd94,Wang94,Anderson-Sprecher96} 
	and, more relevantly, as an SP since batch trajectory modeling \cite{Tong12GP,Anderson15}. 
	The idea of SP learning is to use the prior distribution of a specified function and the training data to predict the function values of points not included in the training data \cite{williams2006gaussian}. This property has enabled its application in numerous time series analysis and prediction problems. 
	The two main types of estimators can be illustrated in Fig. \ref{fig:Innovative}. 
	
	\subsubsection{SP for tracking} The most widely used SP is the family of elliptical processes, 
	including the GP and StP, where any collection of function values has a desired elliptical distribution, with a covariance matrix constructed using a kernel. 
	Since the pioneering work \cite{ko2009gp,hartikainen2010kalman}, 
	the GP has been 
	incorporated into various calculation steps of SSM-based filters for tracking such as the Kalman filter \cite{williams2006gaussian,lyu2021adaptive} 
	and particle filter \cite{ko2009gp,sun2020gaussian,goodyer2023gapp,goodyer2023flexible,hu2022particle} or used individually in a recursive, online manner. In the latter, a representative method is the so-called GP motion tracker (GPMT) algorithm \cite{aftab2019gaussian,aftab2020learning}. 
	In GPMT, the target location is explicitly 
	modeled by a GP with a temporal argument, which can learn the target motion model online and switch flexibly. 
	It has been shown that different kernels and training data result in diverse tracking performance \cite{aftab2020impact} and factorisation methods have been proposed to expedite the GP learning \cite{lyu2022efficient}. Posterior Cram\'er-Rao bounds (CRBs) \cite{Zhao19CRLB-GP} and upper confidence bounds \cite{Liu23GPUpperbounds} are derived, respectively, for the performance of the GP-based SSMs. Moreover, the GP has also been extended to the multisensor system \cite{liu2022learning}. %
	More recently, a track segment association algorithm uses the local GP to predict and backtrack track segments \cite{guo2022joint}. Furthermore, the time-varying constant velocity model is incorporated into the GP learned online \cite{guo2023hybrid} to form a hybrid driving method to improve the prediction. There have also been numerous outstanding endeavors dedicated to modeling extended targets using the GP, e.g. \cite{Wahlstrom15ET-GP,Tang19GP-ET,Aftab19GP-ET,guo2020simultaneous,Qiu2024GPJPDA-ET}.  
	%
	While the GP itself is defined by the mean function and covariance kernel, the StP serves as an extension of the GP 
	introducing robustness to outliers and heavy-tailed measurement noise. 
	To gain robustness to outliers and heavy-tailed measurement noise, the StP has attracted much attention for non-Gaussian estimation and tracking, for example,  \cite{Agamennoni12Stu,Roth13Stu,solin2015state,Huang17Stu,Huang19StuKLD,Zhang21Stu}, just to name a few.

	In short, 
		existing SP-for-tracking approaches are mostly based on the SSM and 
		do not directly produce trajectory estimates. 
		The trajectory is given by the concatenation of optimal discrete-time point estimates, 
		which, however, does not necessarily yield the optimal trajectory estimate; see the illustration given in \cite{Chen2018}. In fact, there have been various optimization-based estimation methods such as the moving horizon estimator. \cite{zou2020moving,alessandri2011moving} and the batch trajectory GP \cite{Tong12GP,Anderson15,Rong2019shipGP,Chen20GPflight,Zhao21DeepGP,Nguyen2024GPTR}. 
		However, they are mostly still limited by the assumption of the Markov model and do not provide an online estimation of the continuous-time T-FoT. 
		The continuous-time T-FoT enables the acquisition of the state estimate immediately for any time in the fitting period. However, the converse is not true. 
		
		\subsubsection{Colored measurement noise}
		In the literature on target tracking, the statistical correction of measurement noise over time, a.k.a. colored noise \cite{Wu96colorN,Wang15color,Helgesen19colored,Yu20multiplicativeNoise}, is mostly overlooked. In fact, the impact of the colored noise on the CRB is non-negligible \cite{Lambert12CRBcolored,Yang25MUBcolorednoise}.A variety of approaches for handling colored noise have been devised which, however, are mostly constrained to the assumption of short-term temporal correlations. Specifically, the correlation often exists only between adjacent time instants. 
		It should be noted that measurement noises stem from three main aspects: the internal characteristics of the sensor such as thermal noise \cite{Helgesen19colored} and flicker noise, the external environmental conditions such as electromagnetic interference/glint \cite{Bilik10glint} and temperature fluctuations, and the properties of the measurement circuit such as power supply noise and redshift error \cite{Gao23redshift}. Contrary to the assumption of time-independence, they often exhibit significant cross-time correlations. In fact, some applications require the noise of a specific color \cite{Lee17color}. 
		Consequently, taking into account the temporal correlation of the measurement noise, not simply between adjacent time instants (as commonly assumed for colored noises in the literature of estimation), but in a long-term time range, is of great significance from both theoretical and practical perspectives. However, we are not aware of any previous tracker that models the colored noises as an SP. 
		

		\begin{figure}[htbp]
			\centering
			\includegraphics[width=0.46\textwidth]{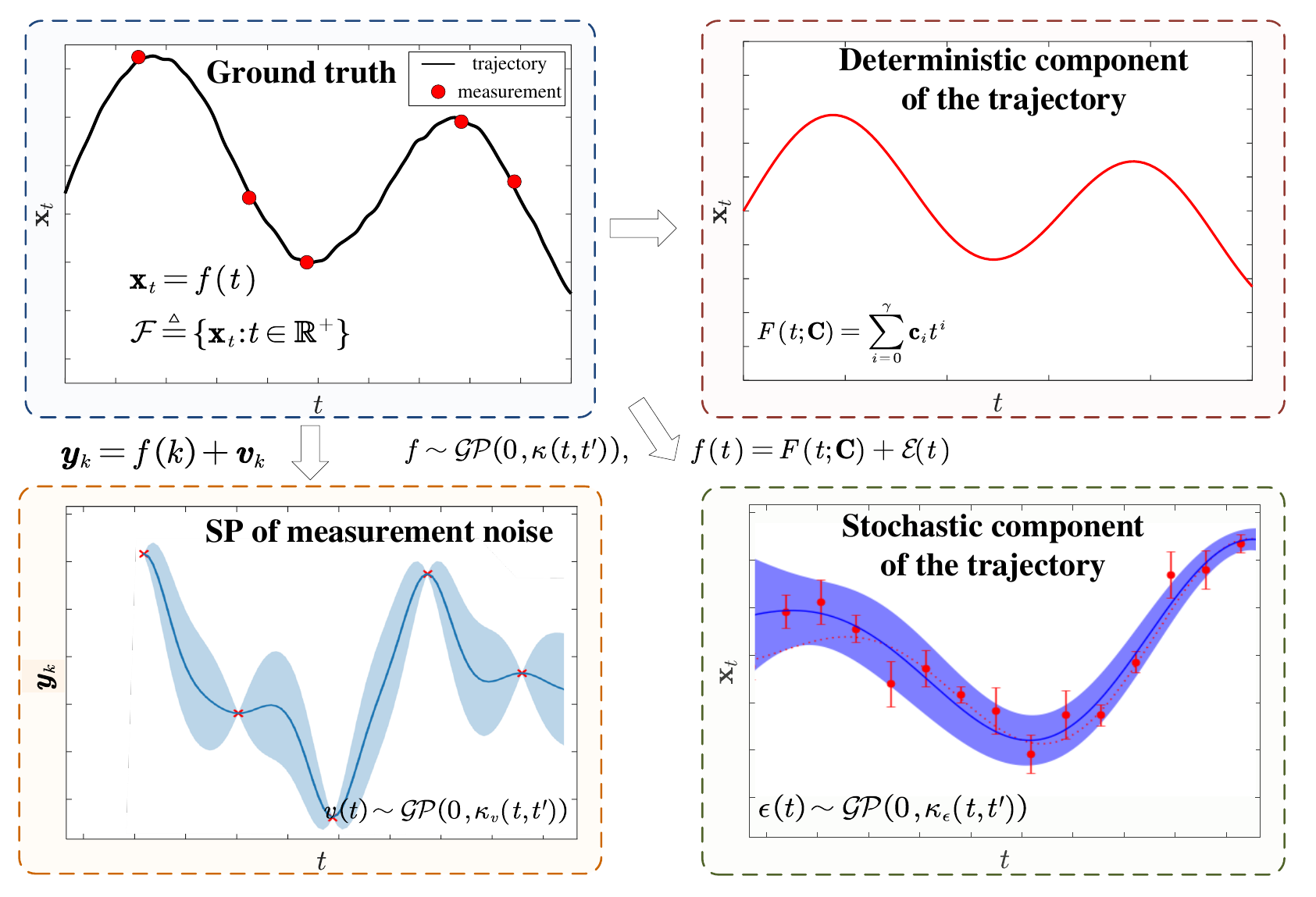}
			\caption{Key idea of the DSD approach to T-FoT estimation.}
			\label{fig:DSD_3_Comp}
		\end{figure}

		\subsection{Contribution and Organization of This Paper}
		The contribution and novelty of this paper include: 
		
		\begin{itemize}
			\item We redefine the classic target tracking problem as a T-FoT-oriented SP learning task based on DSD \eqref{eq:cramer}. Our approach is able to provide the trajectory estimate directly that comprises the trend T-FoT estimate $F(t;\mathbf{C})$ and the RSP $\mathcal{SP}_\epsilon$ approximated by the zero mean GP/StP. This leads to a computationally efficient hierarchical/two-stage estimation method in which we first estimate the deterministic trend component of the target trajectory and then refine our estimate by taking into account the temporal correlation and randomness via GP/StP to deal with the residual fitting error of the trajectory. 
			The T-FoT obtained in the first stage will be used as the mean function of the RSP in the second stage. Relevantly, we have documented a patent that used a recursive neural network to compensate for the T-FoT fitting error in \cite{Li23Patent} but no SP model was built. 
			\item We opt for using the GP/StP to approximate the RSP $\mathcal{SP}_\epsilon$ rather than attempting a direct approximation of the TSP $\mathcal{SP}_{x}$. This is simply because the TSP is typically complicated, non-stationary, and incorporates a significant deterministic component. However, with the use of DSD, as elaborated in \cite{Cramer1961,Box1994}, it is reasonable and justifiable to assume that the measurement noise, the RSP adheres to the properties of stationarity (at least approximately) and has a zero mean, separately. 
			In contrast, the assumption that the TSP is entirely stationary (and even has a zero mean) is arguably overly restrictive and unrealistic, which fails to account for the compoundness nature of the trajectories involved in practice. 
			\item We model the temporal correlation between the measurement noises using an SP too. The resulting SP learning approach is Markov/SSM-free and advantageous in representing the long-term correlation between the data. 
			Experimental results show that our proposed method outperforms traditional GP/StP-for-tracking approaches that use no DSD and overlook the temporal correlation between the measurements, especially for maneuvering target tracking in complex scenarios. 
		\end{itemize}

		The DSD approach that has been applied in our approach can be illustrated in Fig. \ref{fig:DSD_3_Comp}. As shown, our approach deals with the deterministic trend, the T-FoT fitting error, and the measurement noise, separately. The remainder of this paper is organized as follows. Section \ref{sec:background} provides preliminaries on GP and StP. Section \ref{sec:TSP Tracking} describes the proposed tracking method in detail. Section \ref{sec:Simulations} presents the performance evaluation and experimental results. The paper is concluded in Section \ref{sec:Conclusion}. The key 
		notations used in this paper are given in Table \ref{tbl:symbols}.

		\begin{table}[htbp]
			\centering
			\caption{List of key notations}
			\begin{tabular}{ll}
				\toprule
				Notation & Interpretation \\
				\midrule
				$t$ & $t\in \mathbb{R}^+$, continuous time (positive real number) \\
				$\mathbf{t}_k$ & $\mathbf{t}_k \triangleq \left[ k',k'+1\cdots ,k\right] ^{\top}$, time window \\
				$\mathbf{x}_t, \mathcal{SP}_{x}$ &target state at time $t$ and the corresponding SP \\
				$k$ & $k\in \mathbb{N}^+$, discrete time (positive integer) \\
				$\mathbf{y}_k$ & measurement at time $k$ \\
				${\mathbf{H}}$ & linear measurement matrix at time $k$ \\
				$\mathbf{v}_k$ & measurement noise at time $k$ \\
				$\sigma^{2}\mathbf{I}$ & variance of the measurement noise \\
				$F\left ( t;\mathbf{C} \right )$ & polynomial T-FoT specified by parameters $\mathbf{C}$   \\
				$d$ & maximum length of the sliding time window \\
				$\mathcal{N}(\mathbf{m},\mathbf{K})$ & normal distribution with mean $\mathbf{m}$, covariance $\mathbf{K}$ \\ 
				$\mathcal{GP}(m(t),\kappa(t, t'))$ &  GP with mean $m(t)$ and covariance $\kappa(t, t')$\\
				$\mathcal{T}(\mathbf{m},\mathbf{K},\nu)$ & Student's $t$ distribution with mean $\mathbf{m}$, \\
				& covariance $\mathbf{K}$ and degree of freedom $\nu$ \\ 
				$\mathcal{TP}(m(t),\kappa(t, t'),\nu)$ &  StP with mean $m(t)$, covariance $\kappa(t, t')$ \\
				& and degree of freedom $\nu$\\
				$g(k),\mathcal{SP}_{g} $ & pseudo measurement at time $k$ and its SP\\
				$e(k),\mathcal{SP}_{e} $ & measurement fitting error at time $k$ and its SP\\ 
				$\epsilon(t),\mathcal{SP}_{\epsilon} $ & T-FoT fitting residual at time $t$ and its SP \\ 
				$v(k),\mathcal{SP}_{v}$ & measurement noise at time $k$ and its SP \\
				$\mathbb{C}\left( a , b\right)$ & covariance between $a$ and $b$ \\
				$\varepsilon_{m}$, $\zeta$ & the variance and length scale of the RBF kernel \\
				\bottomrule
			\end{tabular}
			\label{tbl:symbols}
		\end{table}

		\section{Preliminaries}
		\label{sec:background}
		
		In this paper, the following assumptions that have been widely used in the literature, e.g. \cite{aftab2019gaussian,aftab2020learning}, are made. 
		\begin{enumerate}
			\item[A1] Cross-coordinates coupling is weak enough to be ignored. This enables us to carry out inference in each coordinate individually for computation efficiency. 
			\item[A2] The state to infer is fully observable. More specifically, this paper focuses on the linear measurement of the target position, and the T-FoT is for localization. 
		\end{enumerate}

		\subsection{GP Regression}
		
		We briefly review the formulation of univariate GP regression, 
		which is applicable to multivariate inputs and outputs in a similar way. 
		Let $\mathcal{T}$ be the input space. For any finite input set $\boldsymbol{t} \triangleq \left\{ t_1,t_2,\cdots ,t_n \right\} \subseteq \mathcal{T}$ , if the set of random variables $\mathbf{f} \triangleq [f(t_1),f(t_2),\cdots,f(t_n)]^{\top}$  follows a joint normal distribution, that is, 
		\begin{equation} 
			\mathbf{f} \sim \mathcal{N}(\mathbf{m}(\boldsymbol{t}),\mathbf{K}(\boldsymbol{t},\boldsymbol{t})), 
		\end{equation} 
		where $\mathbf{m}(\boldsymbol{t})=[m(t_1),m(t_2),\cdots,m(t_n)]^{\top}$ is the mean vector, and the elements of the covariance matrix $\mathbf{K}(\boldsymbol{t},\boldsymbol{t})$ are $K_{ij}=\kappa(t_i,t_j)$, for $i,j = 1,2,\cdots,n$. 
		Then, the function $f:\mathcal{T}\to\mathbb{R}$ follows a G, denoted as 
		\begin{equation} 
			f(t)\sim \mathcal{GP}(m(t),\kappa(t,t')) , \label{eq:GP_mapping} 
		\end{equation}
		where the mean function $m(\cdot):\mathcal{T}\to\mathbb{R}$ is defined as $m(t)=\mathbb{E}[f(t)]$, 
		and the covariance kernel function $\kappa(\cdot,\cdot):\mathcal{T}\times\mathcal{T}\to\mathbb{R}$ is defined as $\kappa(t,t') = \mathbb{C}[f(t),f(t')]$. 
		
		
		
		
		Consider the observed data $\mathbf{y}= \mathbf{H} \mathbf{f} +\mathbf{v} $, where $\mathbf{H}$ is a linear matrix that has full row rank, $\mathbf{v}$ is a noise vector, and $\mathbf{v }\sim \mathcal{N}\left( 0,\sigma ^{2}\mathbf{I} \right)$.
		According to the property of the GP, given a collect of $n$ data $\boldsymbol{\mathbf{y}}=[{y}_1,{y}_2,\cdots,{y}_n]^{\top}$, the predicted value $f_{\ast}$ at a new input point $t_{\ast}$ is also normally distributed: 
		\begin{equation} 
			p(f_{\ast}|t_{\ast},\boldsymbol{t},\boldsymbol{\mathbf{y}}) = \mathcal{N}(\mu_{\ast},\Sigma_{\ast}), 
		\end{equation} 
		where the predicted mean $\mu_{\ast}$ and predicted covariance $\Sigma_{\ast}$ are 
		\begin{equation} 
			\mu_{\ast} = m({t}_{\ast}) + \mathbf{k}_*^{\top}\mathbf{J}_{\boldsymbol{t}} \big(\boldsymbol{\mathbf{y}}-\mathbf{m}(\boldsymbol{t})\big), 
			\label{eq:GPR_mean} 
		\end{equation} 
		\begin{equation} 
			\Sigma _{\ast} = \kappa\left( t_{\ast},t_{\ast} \right) -\lVert \mathbf{k}_* \rVert _{\mathbf{J}_{\boldsymbol{t}}},
			\label{eq:GPR_covariance} 
		\end{equation} 
		where $\mathbf{k}_*=[\kappa(t_{\ast},t_1),\kappa(t_{\ast},t_2),\cdots,\kappa(t_{\ast},t_n)]^{\top}$ denotes the vector of covariances between the test point $t_{\ast}$ and the training points and $\lVert \mathbf{k} \rVert _{\Sigma}\triangleq \mathbf{k}^{\top}\Sigma ^{-1}\mathbf{k}$. There are two different ways to set $\mathbf{J}_{\boldsymbol{t}}$ according to the real need: one assumes noise-free data and uses  $\mathbf{J}_{\boldsymbol{t}}=\mathbf{K}\left( \boldsymbol{t,t} \right)^{-1}$ and the other assumes noisy data and uses $\mathbf{J}_{\boldsymbol{t}}=\left[ \mathbf{K}\left( \boldsymbol{t,t} \right) +\sigma ^{2}\mathbf{I} \right] ^{-1}$.

		\subsection{StP Regression}
		
		For any finite input set $\boldsymbol{t} \triangleq \left\{ t_1,t_2,\cdots ,t_n \right\} \subseteq \mathcal{T}$ , if the set of random variables $\mathbf{f} \triangleq [f(t_1),f(t_2),\cdots,f(t_n)]^{\top}$ follows a joint Student's-$t$ distribution, that is, 
		\begin{equation} 
			\mathbf{f}\sim \mathcal{T}\left( \mathbf{m}\left( \boldsymbol{t} \right) ,\frac{\nu-2}{\nu}\mathbf{K}\left( \boldsymbol{t,t} \right) ,\nu \right) ,
		\end{equation} 
		where $\mathcal{T}\left( \cdot \right) $ represents the Student's-$t$ distribution. 
		Then, the function $f:\mathcal{T}\to\mathbb{R}$ follows an StP, denoted as
		\begin{equation} 
			f(t)\sim \mathcal{TP}(m(t),\kappa(t,t'),\nu),
			\label{eq:StP_mapping} 
		\end{equation}
		where $\nu >2$ is the degrees of freedom (DoF), controlling the tail behavior of the distribution. In particular, under the same kernel, as $\nu \to \infty$, the StP converges to a GP \cite{shah2014student}.

		
		Consider the observed data $\mathbf{y}= \mathbf{H} \mathbf{f} +\mathbf{v} $, where $\mathbf{v }\sim \mathcal{T}\left( 0,\sigma ^{2}\mathbf{I},\nu \right)$. 
		According to the property of the StP, 
		the predicted value $f_{\ast}$ at a new input point $t_{\ast}$ is $t$ distributed: 
		\begin{equation} 
			p\left( f_{\ast}|t_{\ast},\boldsymbol{t},\mathbf{y} \right) =\mathcal{T}\left( \mu _{\ast},\Sigma_{\ast},\nu_{\ast} \right) ,
		\end{equation} 
		where the predicted mean $\mu_{\ast}$ is the same calculated as in \eqref{eq:GPR_mean} and the predicted covariance $\Sigma_{\ast}$ and DoF $\nu _{\ast}$ are given, respectively, as follows 
		\begin{equation} 
			\Sigma _{\ast}=\frac{\nu +\lVert \mathbf{y} \rVert _{\mathbf{J}_{\boldsymbol{t}}}-2}{\mathbf{\nu }+\left| \boldsymbol{t} \right|-2}\left( \kappa\left( t_{\ast},t_{\ast} \right) -\lVert \mathbf{k}_* \rVert _{\mathbf{J}_{\boldsymbol{t}}} \right) ,
			\label{eq:StPR_covariance} 
		\end{equation} 
		\begin{equation} 
			\nu _{\ast}=\nu +\left| \boldsymbol{t} \right| , 
			\label{eq:StPR_dof} 
		\end{equation}
		with the same definition of $\mathbf{k}_*,\mathbf{J}_{\boldsymbol{t}}$ as in \eqref{eq:GPR_mean}.

		\section{T-FoT GP Modeling and Learning}
		\label{sec:TSP Tracking}
		In this section, the data model is first given before we present our proposed trajectory learning approach.
		
		\subsection{Data Model}
		Following the T-FoT model \eqref{eq:T-FoT} and TSP \eqref{eq:cramer}, we consider the linear measurement function with additive noise, that is, 
		\begin{align}   
			\mathbf{y}_k
			&= {\mathbf{H}} \mathbf{x}_k  + \mathbf{v}_k , \nonumber \\
			& = {\mathbf{H}} F(k;\mathbf{C})+ {\mathbf{H}} \epsilon(k) + \mathbf{v}_k , \label{eq:Meas-h-DSD}
		\end{align}
		where ${\mathbf{H}}$ denotes the linear measurement matrix that has full row rank, 
		$\mathbf{y}_k \in \mathcal{Y}$ %
		and $\mathbf{v}_k $ 
		denote the colored measurement and the zero mean noise at time $k\in \mathbb{N}^+$, respectively, and $\mathcal{Y}$ denotes the measurement space. In this paper, we employ the GP and StP to model the temporal correction of the measurement noises, in Sec. \ref{sec:T-FoT-GP} and \ref{sec:T-FoT-StP-algorithm}, respectively. 
		
		Based on the decomposition of the measurement \eqref{eq:Meas-h-DSD}, the following definitions are made
		\begin{align}
			g(k) & \triangleq {\mathbf{H}} \epsilon(k) , \label{eq:def_g} \\
			e(k) & \triangleq \mathbf{y}_k-{\mathbf{H}}F(k;\mathbf{C}) , \label{eq:def_3} 
		\end{align}
		where $g(k)$ is called the pseudo measurement hereafter, $e(k)$ denotes the measurement fitting error at time $k$ generated in fitting the trend of the T-FoT \cite{li2018joint,li2023target,Li25TFoT-part2} and it is easy to get
		\begin{equation} \label{eq:g=e+v}
			g(k) = e(k) -\mathbf{v}_k .
		\end{equation}
		
		Since the linear transformation of an SP results in another SP, 
		the pseudo measurement $g(k)$ also follows an SP, denoted as $\mathcal{SP}_g$. In what follows, we will elaborate on the procedures for learning TSP $\mathcal{SP}_{x}$ in specific forms of GP and StP, respectively. 
		To prevent the need for an overly high-order T-FoT, 
		a time window $\mathbf{t}_k \triangleq \left[ k',k'+1,\cdots ,k\right] ^{\top}$ of an appropriate length that slides with time is needed to control the data size for the fitting of T-FoT and the learning of TSP, where $k'= \max(k-d, 1)$, $d$ is the maximum length of the sliding time window. This complies with the default setup of the T-FoT fitting \cite{li2018joint,li2023target,Li25TFoT-part2}, which differs from the batch GP trajectory approaches \cite{Tong12GP,Anderson15,Rong2019shipGP,Chen20GPflight,Zhao21DeepGP,Nguyen2024GPTR}. Unless explicitly stated otherwise, the measurement fitting errors $\mathbf{e}({\mathbf{t}_k})$ received in the sliding time window $\mathbf{t}_k$ are the training data to learn the RSP $\mathcal{SP}_\epsilon$ and then the TSP $\mathcal{SP}_{x}$. 
		
		\begin{algorithm}[t]
			\caption{Flowchart of DSD-based TSP Learning}\label{AlgorithmTable}
			\begin{algorithmic}
				\setlength{\lineskip}{3pt}
				\setlength{\lineskiplimit}{3pt}
				\STATE \textbf{Input} $\mathbf{y}_{\mathbf{t}_k}, {\mathbf{H}}, \mathbf{\sigma}^2, d$ 
				\STATE \textbf{Output} TSP $\mathcal{SP}_{x} $
				\STATE \hspace{0.5cm}\textbf{Step 1.} calculate $F(t;\mathbf{C}_k)$ via \eqref{eq:C_learning}. 
				\STATE \hspace{0.5cm}\textbf{Step 2.} {generate pseudo data} $e(k)=\mathbf{y}_k-{\mathbf{H}}F(k;\mathbf{C}_k)$. 
				\STATE \hspace{0.5cm}\textbf{Step 3.}  train the measurement fitting error SP $\mathcal{S} \mathcal{P}_e $ using the pseudo data, that is, $e\left( k \right) \sim \mathcal{S} \mathcal{P}_e$. 
				\STATE \hspace{0.5cm}\textbf{Step 4.} determine $\mathcal{S} \mathcal{P}_g$ based on $\mathcal{S} \mathcal{P}_e$ via \eqref{eq:g=e+v}
				\STATE \hspace{0.5cm}\textbf{Step 5.} determine $\mathcal{S} \mathcal{P}_{\epsilon}$ based on $\mathcal{S} \mathcal{P}_g$ via \eqref{eq:def_g}.
				\STATE \hspace{0.5cm}\textbf{Step 6.} determine TSP $\mathcal{SP}_{x} $ based on $F(t;\mathbf{C}_k)$ and $\mathcal{S} \mathcal{P}_{\epsilon}$ via \eqref{eq:cramer}
			\end{algorithmic}
		\end{algorithm}

		

		\subsection{T-FoT GP Modeling and Decomposition}
		\label{sec:T-FoT-GP}
		The following assumption is made to account for the temporal correlation between the measurement noises. 
		\begin{enumerate}
			\item[A3] The collection of the measurement noises corresponding to any time follows a multivariate zero mean Gaussian distribution, that is, the measurement noises follow a discrete time zero mean GP 
			$$ \mathcal{SP}_v \triangleq \{\mathbf{v}_k; k\in \mathbb{N}^+\} \sim \mathcal{GP}(\mathbf{0},\kappa_v(k,k')),$$  
			where $\boldsymbol{0}$ denotes a zero vector of appropriate dimension.
		\end{enumerate}
		
		\begin{remark}
			Independent and identically distributed (IID) Gaussian noise can be viewed as a specific case of the above GP assumption with $\kappa_v(k,k') = \mathbf{0}$ if $k\neq k'$ and $ \kappa_v(k,k) = \kappa_v(k',k'), \forall k, k' \in \mathbb{N}^+$.  
		\end{remark}
		
		Our first main result is on modeling both the target states and measurement noises by continuous- and discrete-time GPs, respectively. To this end, several significant statistical properties of the GP shall be used. 
		
		\begin{lemma} \label{lemma_linear_Function_GP}
			The linear transformation of a GP ${\epsilon}(k) \sim \mathcal{GP}(m_{\epsilon}(k),\kappa_{\epsilon}(k, k'))$ is still a GP. Specifically, $g(k) \triangleq {\mathbf{H}} \epsilon(k) \sim \mathcal{GP}({\mathbf{H}} m_{\epsilon}(k),{\mathbf{H}} \kappa_{\epsilon}(t,t'){\mathbf{H}}^{\top})$.
		\end{lemma}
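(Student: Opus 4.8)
The plan is to establish the claim in two directions: first show that the finite-dimensional distributions of $g(k) = \mathbf{H}\epsilon(k)$ are jointly Gaussian with the stated mean and covariance, and then invoke the definition of a GP (from the Preliminaries) to conclude that $g$ is itself a GP. Since a stochastic process is a GP precisely when every finite collection of its values has a joint normal distribution, it suffices to work with an arbitrary finite index set.

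First I would fix an arbitrary finite set of times $\boldsymbol{k} = \{k_1, \dots, k_n\}$ and stack the values: let $\boldsymbol{\epsilon} = [\epsilon(k_1)^\top, \dots, \epsilon(k_n)^\top]^\top$. By the GP assumption on $\epsilon$, this stacked vector is multivariate normal, $\boldsymbol{\epsilon} \sim \mathcal{N}(\mathbf{m}_\epsilon(\boldsymbol{k}), \mathbf{K}_\epsilon(\boldsymbol{k},\boldsymbol{k}))$, where the blocks of $\mathbf{K}_\epsilon$ are $\kappa_\epsilon(k_i,k_j)$. Then the stacked pseudo-measurement vector is obtained by the block-diagonal linear map $\mathrm{diag}(\mathbf{H}, \dots, \mathbf{H})$ applied to $\boldsymbol{\epsilon}$. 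A linear transformation of a multivariate Gaussian is again multivariate Gaussian, with mean and covariance transformed in the standard way; applying this block by block gives mean blocks $\mathbf{H} m_\epsilon(k_i)$ and covariance blocks $\mathbf{H}\,\kappa_\epsilon(k_i,k_j)\,\mathbf{H}^\top$. This is exactly the finite-dimensional law of a GP with mean function $\mathbf{H} m_\epsilon(\cdot)$ and kernel $\mathbf{H}\kappa_\epsilon(\cdot,\cdot)\mathbf{H}^\top$, so the definition of a GP is satisfied and the claim follows.

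There is no serious obstacle here — the statement is essentially the closure of the Gaussian family under linear maps, lifted to the process level via finite-dimensional distributions. The only point requiring a little care is bookkeeping: when $\epsilon(k)$ is vector-valued, one must be explicit that the relevant linear map on the stacked vector is block-diagonal (one copy of $\mathbf{H}$ per time index), and that cross-covariance blocks transform as $\mathbf{H}\kappa_\epsilon(k_i,k_j)\mathbf{H}^\top$ rather than only the diagonal blocks. I would also note in passing that the resulting kernel $\mathbf{H}\kappa_\epsilon\mathbf{H}^\top$ is a valid (positive semidefinite) matrix-valued kernel, which is immediate since $\mathbf{K}_\epsilon$ is PSD and congruence by the block-diagonal $\mathrm{diag}(\mathbf{H},\dots,\mathbf{H})$ preserves positive semidefiniteness — this guarantees the candidate GP is well defined. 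With these remarks in place the proof is a short, routine verification.
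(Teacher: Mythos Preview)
Your proposal is correct and follows essentially the same route as the paper. The paper's proof simply cites the closure of GPs under linear maps as a well-known fact and then verifies the mean and covariance formulas directly (Appendix~\ref{appendx:prof_lemma_linear_Function_GP}); you do the same moment computations but additionally spell out, via finite-dimensional distributions and the block-diagonal map $\mathrm{diag}(\mathbf{H},\dots,\mathbf{H})$, \emph{why} the transformed process remains Gaussian rather than deferring that to a reference.
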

		\begin{proof}
			This lemma is a well-known property of GP \cite{williams2006gaussian}. In detail, the calculation of the mean and covariance functions is given in Appendix \ref{appendx:prof_lemma_linear_Function_GP}. 
		\end{proof}
		
		\begin{lemma} \label{lemma_GP-mt}
			Assuming that the TSP $\mathcal{SP}_{x}$ is a G, that is,  $f(t)\sim \mathcal{GP}(m(t),\kappa(t,t'))$ and further assuming that the fitted T-FoT $F(t;\mathbf{C})$ is chosen as the mean function $m(t)$ of the trajectory GP, 
			the RSP $\mathcal{SP}_\epsilon \triangleq\{ \epsilon(t):=f(t)-m(t); t\in \mathbb{R}^+\} $ is a zero-mean GP with the same covariance function $\kappa(t,t')$. 
		\end{lemma}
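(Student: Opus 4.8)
The plan is to reduce the claim to the defining property of a GP through its finite-dimensional distributions and then exploit translation-invariance of the Gaussian family (the degenerate case of Lemma~\ref{lemma_linear_Function_GP} in which $\mathbf{H}$ is the identity and a deterministic shift is added). First I would fix an arbitrary finite collection of time instants $\boldsymbol{t} \triangleq \{t_1,\dots,t_n\} \subseteq \mathbb{R}^+$ and set $\mathbf{f} \triangleq [f(t_1),\dots,f(t_n)]^{\top}$, which by hypothesis satisfies $\mathbf{f}\sim\mathcal{N}(\mathbf{m}(\boldsymbol{t}),\mathbf{K}(\boldsymbol{t},\boldsymbol{t}))$ with $\mathbf{m}(\boldsymbol{t}) = [F(t_1;\mathbf{C}),\dots,F(t_n;\mathbf{C})]^{\top}$ under the choice $m(t)=F(t;\mathbf{C})$. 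The key observation is that, with $\mathbf{C}$ fixed, $F(\cdot;\mathbf{C})$ is a \emph{deterministic} function of time, so the vector $\boldsymbol{\epsilon}\triangleq[\epsilon(t_1),\dots,\epsilon(t_n)]^{\top} = \mathbf{f}-\mathbf{m}(\boldsymbol{t})$ is merely a constant translation of a Gaussian vector and is therefore itself Gaussian.

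It then remains to identify the two parameters of this Gaussian vector. Taking expectations, $\mathbb{E}[\boldsymbol{\epsilon}] = \mathbb{E}[\mathbf{f}]-\mathbf{m}(\boldsymbol{t}) = \mathbf{m}(\boldsymbol{t})-\mathbf{m}(\boldsymbol{t}) = \mathbf{0}$, which yields the zero-mean assertion. For the covariance, since subtracting deterministic constants leaves second central moments unchanged, $\mathbb{C}[\epsilon(t_i),\epsilon(t_j)] = \mathbb{C}[f(t_i)-F(t_i;\mathbf{C}),\,f(t_j)-F(t_j;\mathbf{C})] = \mathbb{C}[f(t_i),f(t_j)] = \kappa(t_i,t_j)$, so the covariance matrix of $\boldsymbol{\epsilon}$ is exactly $\mathbf{K}(\boldsymbol{t},\boldsymbol{t})$. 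Since $\boldsymbol{t}$ was arbitrary, every finite-dimensional law of $\{\epsilon(t):t\in\mathbb{R}^+\}$ is zero-mean Gaussian with the kernel inherited from $f$, and these laws are automatically consistent (being images of the consistent family for $f$ under the same fixed shift); by the defining property of a GP used around \eqref{eq:GP_mapping}, this gives $\epsilon(t)\sim\mathcal{GP}(\mathbf{0},\kappa(t,t'))$.

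I do not anticipate a genuine technical obstacle: the whole argument amounts to translation-invariance of the Gaussian family at the level of finite-dimensional marginals. The only point deserving explicit care — which I would flag in a remark rather than in the proof body — is that $\mathbf{C}$ must be regarded as given when forming $\epsilon(\cdot)$: the equality of kernels and the exact vanishing of the mean hold under the idealization that the GP mean function coincides with the trend $F(\cdot;\mathbf{C})$, whereas when $\mathbf{C}$ is estimated from the same windowed data the residual process is only approximately zero-mean with kernel approximately $\kappa$, consistently with the DSD rationale following \eqref{eq:cramer}. I would close by noting that combining this lemma with Lemma~\ref{lemma_linear_Function_GP} immediately carries the conclusion through $g(k)=\mathbf{H}\epsilon(k)$ to the pseudo-measurement process $\mathcal{SP}_g$, as invoked in Steps~4--5 of Algorithm~\ref{AlgorithmTable}.
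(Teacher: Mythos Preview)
Your proof is correct and follows essentially the same route as the paper's own argument: the paper simply computes $\mathbb{E}[\epsilon(k)]=0$ and $\mathbb{C}_\epsilon(k,k')=\kappa(k,k')$ and cites the well-known GP translation property, while you additionally spell out the Gaussianity of the finite-dimensional marginals via the constant shift of a Gaussian vector. This extra care is sound but not a genuinely different approach.
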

		\begin{proof}
			This lemma is also a well-known property of the GP \cite{williams2006gaussian}. In detail, the calculation of the mean and covariance functions is given in Appendix \ref{appendx:prof_lemma_GP-mt}. 
		\end{proof}
		
		\begin{lemma} \label{lemma_GP_plus_GP}
			The sum $ e\left( k \right) ={g}\left( k \right) +v\left( k \right) $ of two independent GPs ${g}(k) \sim \mathcal{GP}(m_{g}(k),\kappa_{g}(k, k'))$ and $ v(k)\sim \mathcal{GP}(m_{v}(k),\kappa_{v}(k, k'))$ is also a GP $e(k)\sim\mathcal{GP}(m_{g}(k)+m_{v}(k),\kappa_{g}(k, k')+\kappa_{v}(k, k'))$.
		\end{lemma}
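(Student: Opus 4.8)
\textbf{Proof proposal for Lemma \ref{lemma_GP_plus_GP}.}
The plan is to verify the defining property of a GP directly: that every finite-dimensional marginal of $e(\cdot)$ is jointly Gaussian, with the mean vector and covariance matrix obtained by evaluating $m_g+m_v$ and $\kappa_g+\kappa_v$ at the chosen time points. First I would fix an arbitrary finite index set $\boldsymbol{k}\triangleq\{k_1,\dots,k_n\}\subseteq\mathbb{N}^+$ and stack the corresponding values into $\mathbf{g}\triangleq[g(k_1),\dots,g(k_n)]^\top$ and $\mathbf{v}\triangleq[v(k_1),\dots,v(k_n)]^\top$. By the GP hypotheses on $g$ and $v$, these satisfy $\mathbf{g}\sim\mathcal{N}(\mathbf{m}_g(\boldsymbol{k}),\mathbf{K}_g(\boldsymbol{k},\boldsymbol{k}))$ and $\mathbf{v}\sim\mathcal{N}(\mathbf{m}_v(\boldsymbol{k}),\mathbf{K}_v(\boldsymbol{k},\boldsymbol{k}))$.

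Next I would use independence of the two processes to conclude that the stacked vector $[\mathbf{g}^\top,\mathbf{v}^\top]^\top$ is itself jointly Gaussian, with block-diagonal covariance $\mathrm{diag}(\mathbf{K}_g(\boldsymbol{k},\boldsymbol{k}),\mathbf{K}_v(\boldsymbol{k},\boldsymbol{k}))$ and mean $[\mathbf{m}_g(\boldsymbol{k})^\top,\mathbf{m}_v(\boldsymbol{k})^\top]^\top$ — the cross blocks vanish precisely because $g$ and $v$ are independent. Then $\mathbf{e}\triangleq[e(k_1),\dots,e(k_n)]^\top = \mathbf{g}+\mathbf{v} = [\mathbf{I}_n\;\;\mathbf{I}_n][\mathbf{g}^\top,\mathbf{v}^\top]^\top$ is a linear image of a Gaussian vector, hence Gaussian; invoking Lemma \ref{lemma_linear_Function_GP} (or simply the affine-map rule for Gaussians) gives $\mathbf{e}\sim\mathcal{N}(\mathbf{m}_g(\boldsymbol{k})+\mathbf{m}_v(\boldsymbol{k}),\,\mathbf{K}_g(\boldsymbol{k},\boldsymbol{k})+\mathbf{K}_v(\boldsymbol{k},\boldsymbol{k}))$.

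Finally I would read off the mean and covariance functions coordinatewise: $\mathbb{E}[e(k)]=\mathbb{E}[g(k)]+\mathbb{E}[v(k)]=m_g(k)+m_v(k)$ and, using independence to kill the cross terms, $\mathbb{C}[e(k),e(k')]=\mathbb{C}[g(k),g(k')]+\mathbb{C}[v(k),v(k')]=\kappa_g(k,k')+\kappa_v(k,k')$. Since $\boldsymbol{k}$ was arbitrary, this establishes $e(k)\sim\mathcal{GP}(m_g(k)+m_v(k),\kappa_g(k,k')+\kappa_v(k,k'))$, as claimed. I do not anticipate a genuine obstacle here — the only point requiring slight care is justifying joint Gaussianity of $(\mathbf{g},\mathbf{v})$ (marginal Gaussianity alone would not suffice), which is exactly where the independence assumption is used; the rest is the standard closure of the Gaussian family under affine maps, and the whole argument can be relegated to an appendix mirroring the proofs of Lemmas \ref{lemma_linear_Function_GP} and \ref{lemma_GP-mt}.
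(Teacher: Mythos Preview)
Your proposal is correct and matches the paper's approach: the paper's appendix computes the mean by linearity of expectation and the covariance by expanding $\mathbb{C}[e(t),e(t')]$ and invoking independence to drop the cross terms, exactly as you do in your final step. You are slightly more thorough in that you explicitly verify joint Gaussianity of the finite-dimensional marginals via the stacked-vector / linear-map argument, whereas the paper simply cites this as a well-known property of GPs and moves directly to the moment calculations.
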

		\begin{proof} 
			This lemma is also a well-known property of the GP. The calculation of the mean and covariance functions is given in Appendix \ref{appendx:proof_lemma_GP_plus_GP}. 
		\end{proof}

		
		
		Based on the above theoretical results, 
		the learning of the TSP can be decomposed into the following two stages. 

		\subsubsection{First stage (corresponding to Step 1 of Algorithm \ref{AlgorithmTable})}
		We first learn the trend, deterministic part of the TSP $\mathcal{SP}_{x}$ from the measurement in the time window $\mathbf{t}_k$ by assuming a polynomial T-FoT $F\left( t;\mathbf{C}_k \right)$ of $\gamma$-order 
		as follows
		\begin{equation}\label{eq:polynomial}
			F\left ( t;\mathbf{C}_k \right )= \sum_{i=0}^{\gamma }\mathbf{c}_{i}t^i,
		\end{equation}
		where $\mathbf{C}_k=\left\{ \mathbf{c}_i \right\} _{i=0,1,\cdots ,\gamma}$ represents the polynomial trajectory coefficients, $\mathbf{c}_i= \big[c_i^{(1)}, c_i^{(2)}, \cdots, c_i^{(r)}  \big]$, $r$ indicates the dimension in the state space $\mathcal{X}$, 
		the order $\gamma$ should be optimized online as addressed in \cite{Li25TFoT-part2} if it is not available beforehand.  
		
		The trend T-FoT parameters $\mathbf{C}_k$ are estimated by solving the following optimization at the latest time $k$,
		\begin{equation} \label{eq:C_learning}
			\mathbf{C}_k=\underset{\mathbf{C}}{\text{arg}\min}\sum_{t \in \mathbf{t}_k} {\lVert \mathbf{y}_t-  F\left( t; \mathbf{C}\right) \rVert} ,
		\end{equation}
		where $\lVert a-b \rVert $ represents a distance between $a, b$; when the $\ell_2$ norm or Mahalanobis distance is agreed upon, that is, least squares (LS) or weighted LS \cite{li2018joint,li2023target}, which actually admits exact optimal solution in the case of linear measurement. 
		

		
		
		\subsubsection{Second stage} 
		Based on Lemma \ref{lemma_GP-mt}, we set the mean function $m(t)$ of the TSP $\mathcal{SP}_{x}$ as the above learned T-FoT $F(t;\mathbf{C}_k)$, resulting in the RSP $\mathcal{SP}_\epsilon$ being a zero mean GP with the same covariance kernel as that of the TSP $\mathcal{SP}_{x} $; see Step 5 of Algorithm \ref{AlgorithmTable}
		\begin{equation}
			\epsilon\left( t \right) \sim \mathcal{G}\mathcal{P}\left( \boldsymbol{0},\kappa_{\epsilon}\left( t, t' \right) \right) ,
			\label{eq:epsilon_t_modeling}
		\end{equation}
		where $\kappa_{\epsilon}\left( t, t' \right) =\kappa \left( t, t' \right)$.
		
		Further using Lemma \ref{lemma_linear_Function_GP}, we get, corresponding to Step 4 of Algorithm \ref{AlgorithmTable} 
		\begin{equation}\label{eq:g_t_modeling}
			g\left( k \right) \sim \mathcal{G}\mathcal{P}\left( \boldsymbol{0}, \kappa_g \left( k, k' \right) \right) ,
		\end{equation}
		where $ \kappa_g \left( k, k' \right)  \triangleq {\mathbf{H}} \kappa_{\epsilon} \left( k, k' \right){\mathbf{H}}^{\top} ={\mathbf{H}} \kappa \left( k, k' \right){\mathbf{H}}^{\top}  $

		Therefore, we only need to learn the covariance kernel of the above RSP $\mathcal{SP}_\epsilon$ based on the measurement fitting errors $ \mathbf{e}(\mathbf{t})$ (corresponding to Step 2 of Algorithm \ref{AlgorithmTable}).
		Then, given Assumption A3, one can deduce from Lemma \ref{lemma_GP_plus_GP} that, corresponding to Step 3 of Algorithm \ref{AlgorithmTable}
		\begin{equation}
			e\left( k \right) \sim \mathcal{G}\mathcal{P}\left( \boldsymbol{0}, \kappa_e \left( k, k' \right) \right),
			\label{eq:e_t_modeling}
		\end{equation}
		where $\kappa_e \left( k, k' \right) =\kappa_g \left( k, k' \right) + \kappa_v \left( k, k' \right)$. 
		%
		
		Given that the measurement noise covariance $\kappa_v \left( k, k' \right)$ is assumed known, 
		and $\kappa_e \left( k, k' \right)$ can be learned from the data as shown next, then the covariance kernel of the TSP $\mathcal{SP}_{x}$ can be calculated by, corresponding to Step 6 of Algorithm \ref{AlgorithmTable}
		\begin{equation}
			\kappa \left( k, k' \right) = \mathbf{A}_k \big(\kappa_e \left( k, k' \right)-\kappa_v \left( k, k' \right)\big) \mathbf{B}_k ,
			\label{eq:e_kernel}
		\end{equation}
		where $\mathbf{A}_k $ is the left inverse of ${\mathbf{H}}$ and $\mathbf{B}_k $ is the right inverse of ${\mathbf{H}}^{\top}$, that is $ \mathbf{A}_k {\mathbf{H}}= \mathbf{I}, {\mathbf{H}}^{\top}\mathbf{B}_k= \mathbf{I}$.

		\subsection{Hyperparameter Learning}
		\label{RGPlearning}
		To illustrate the learning of the hyperparameters of the measurement fit error SP $\mathcal{SP}_{e}$ from the data (corresponding to Step 3 of Algorithm 1), the kernel of radial basis function (RBF) is considered in our approach. 
		\begin{equation}
			\kappa \left( t, t^{\prime} \right) = \varepsilon_{m}^{2} \exp \left( -\frac{\| t-t^{\prime} \|^{2}}{2 \zeta^{2}} \right) ,
			\label{eq:RBF kernel}
		\end{equation}
		where the hyperparameters $\varepsilon_{m}^2$ and $\zeta$ represent the variance and length scale of the RBF kernel, respectively.

		The goal of learning the RSP $\mathcal{SP}_\epsilon$ online in the second stage of our proposed T-FoT-GP approach is to learn hyperparameters $\boldsymbol{\eta }_k \triangleq \{\varepsilon_{m}^2, \zeta\}$ simultaneously with the estimation of the values of function $e(\cdot)$. For this purpose, we employ the maximum likelihood approach referred to as recursive GP* (RGP*) \cite{huber2014recursive}; see also \cite{aftab2020learning} and theoretical study on the GP parameter estimation \cite{Karvonen20MLforGP}. 
		Finally, the T-FoT and its covariance at time $k$ are given as follows:
		\begin{align}
			\hat{f}_k\left( t \right)&=\hat{\mu}_{k}^{\bar{\epsilon}}+F\left( t;\mathbf{C}_k \right) \nonumber  \\
			&= \mathbf{A}_k \hat{\mu}_{k}^{\bar{e}} +F\left( t;\mathbf{C}_k \right)  \nonumber \\
			& = \mathbf{A}_k  \big( \mathbf{k}_k^{\top}\mathbf{J}_{\mathbf{t}_k} \mathbf{y}_k \big) +F\left( t;\mathbf{C}_k \right)  ,
			\label{eq:estimated_f_mean_GP}
		\end{align}
		\begin{equation}
			\mathbb{C}\left[ \hat{f}_k\left( t\right) \right] = \kappa\left( k,k \right) -\lVert \mathbf{k}_k \rVert _{\mathbf{J}_{\mathbf{t}_k}} , 
			\label{eq:estimated_f_cov_GP}
		\end{equation}
		where $\hat{\mu}_{k}^{\bar{\epsilon}}$ and $\hat{\mu}_{k}^{\bar{e}}$ denote the estimated mean of the functions $\epsilon\left(\cdot\right)$ and $e\left(\cdot\right)$, respectively, evaluated at measurement time $k$, 
		\eqref{eq:GPR_mean} is used in \eqref{eq:estimated_f_mean_GP} with $\mathbf{m}(\boldsymbol{t})=\mathbf{0}$ and \eqref{eq:GPR_covariance} is used in \eqref{eq:estimated_f_cov_GP}, 
		$\mathbf{k}_k$ is calculated by \eqref{eq:e_kernel}. 
		
		\begin{remark}
			Since we model the measurement noise as SP $\mathcal{SP}_v$ and subtract it from the fitting error SP $\mathcal{SP}_e$, it is the noise-free GP regression that should be used in calculating \eqref{eq:GPR_mean} to estimate the parameters of $\mathcal{SP}_e$. 
		\end{remark}

		\section{T-FoT StP Modeling and Learning}
		\label{sec:T-FoT-StP-algorithm}
		In place of Assumption A3, the following assumption may be made to account for the temporal correlation between measurement noises with heavier tails. 
		
		\begin{enumerate}
			\item[A4] The collection of the measurement noises corresponding to any time follows a multivariate zero mean Student's $t$ distribution, that is, the measurement noises follow a discrete time zero mean StP, where $\nu_v>2$, 
			$$ \mathcal{SP}_v \triangleq  \{\mathbf{v}_k; k\in \mathbb{N}^+\} \sim \mathcal{TP}(\mathbf{0},\kappa_v(k,k'),\nu_v).$$
		\end{enumerate} 
		
		Our second main result is on modeling both the target states and measurement noises by continuous- and discrete-time StPs, respectively. 
		However, unlike the case of GP, the StP is not amenable to translation and decomposition while maintaining the joint Student's $t$ distribution. Specifically, we have the following three known results. 
		
		\begin{lemma} \label{lemma_linear_Function_StP}
			Given an StP ${\epsilon}(k) \sim \mathcal{TP}(m_{\epsilon}(k),\kappa_{\epsilon}(k, k'),\nu_{\epsilon})$, its linear transformation $g(k) \triangleq {\mathbf{H}} \epsilon(k)$ is generally no more an StP but an SP with mean ${\mathbf{H}} m_{\epsilon}(k)$, covariance $\frac{\nu_{\epsilon}}{\nu_{\epsilon}-2} {\mathbf{H}} \kappa_{\epsilon}(k, k'){\mathbf{H}}^{\top}$ and DoF $\nu_{\epsilon}$. 
		\end{lemma}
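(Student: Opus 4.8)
The plan is to reduce the statement to a property of the underlying multivariate Student's-$t$ distribution and then lift it back to the process level, mirroring the treatment of the GP in Lemma~\ref{lemma_linear_Function_GP} but keeping track of the much weaker closure properties of the StP. First I would fix an arbitrary finite set of epochs $\boldsymbol{t}=\{k_1,\dots,k_n\}$ and recall that, by the definition of the StP in \eqref{eq:StP_mapping}, the stacked vector $\boldsymbol{\epsilon}=[\epsilon(k_1)^{\top},\dots,\epsilon(k_n)^{\top}]^{\top}$ admits the Gaussian scale-mixture (hierarchical) representation $\boldsymbol{\epsilon}=\mathbf{m}_{\epsilon}(\boldsymbol{t})+\sqrt{r}\,\mathbf{z}$, where $\mathbf{z}\sim\mathcal{N}(\mathbf{0},\mathbf{K}_{\epsilon}(\boldsymbol{t},\boldsymbol{t}))$ is assembled blockwise from $\kappa_{\epsilon}$, and $r>0$ is a scalar random variable (a scaled inverse-$\chi^{2}$ variate with $\nu_{\epsilon}$ degrees of freedom, so that $\mathbb{E}[r]=\nu_{\epsilon}/(\nu_{\epsilon}-2)$) drawn independently of $\mathbf{z}$; this is the standard representation used in, e.g., \cite{shah2014student}.

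Second I would push the linear map through this representation. Writing the stacked pseudo-measurements as $\mathbf{g}=(\mathbf{I}_n\otimes{\mathbf{H}})\boldsymbol{\epsilon}=(\mathbf{I}_n\otimes{\mathbf{H}})\mathbf{m}_{\epsilon}(\boldsymbol{t})+\sqrt{r}\,(\mathbf{I}_n\otimes{\mathbf{H}})\mathbf{z}$, I observe that (i) the scalar multiplier $r$ is untouched by the linear map, so the degrees of freedom are inherited unchanged as $\nu_{\epsilon}$, and (ii) $(\mathbf{I}_n\otimes{\mathbf{H}})\mathbf{z}$ is again a zero-mean Gaussian whose covariance has blocks ${\mathbf{H}}\kappa_{\epsilon}(k_i,k_j){\mathbf{H}}^{\top}$, these blocks assembling into a positive-definite matrix because $\mathbf{H}$ has full row rank. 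Taking expectations gives the mean ${\mathbf{H}}m_{\epsilon}(k)$, and combining the law of total covariance with $\mathbb{E}[r]=\nu_{\epsilon}/(\nu_{\epsilon}-2)$ gives the covariance function $\frac{\nu_{\epsilon}}{\nu_{\epsilon}-2}{\mathbf{H}}\kappa_{\epsilon}(k,k'){\mathbf{H}}^{\top}$. These moment computations are routine — linearity of expectation and bilinearity of covariance applied either to the representation above or directly to $g(k)={\mathbf{H}}\epsilon(k)$ — and I would relegate the explicit algebra to an appendix, in the same style as the proofs of Lemmas~\ref{lemma_linear_Function_GP}--\ref{lemma_GP_plus_GP}; note in passing that letting $\nu_{\epsilon}\to\infty$ recovers Lemma~\ref{lemma_linear_Function_GP} as the limiting case.

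The part I expect to be the real obstacle is making the clause ``generally no more an StP'' precise rather than merely asserting it. The argument above shows $\mathbf{g}$ is always a Gaussian scale mixture, so the point is not that the finite-dimensional laws fail to be elliptical but that the \emph{canonical} StP parametrization of \eqref{eq:StP_mapping} is not stable under the remaining steps of our pipeline: reconstructing $\mathcal{SP}_{\epsilon}$ requires a left-inverse of $\mathbf{H}$, and the decomposition $e(k)=g(k)+v(k)$ requires convolving $\mathcal{SP}_{g}$ with the independent noise process, whereas the sum of two independent multivariate Student's-$t$ vectors is not multivariate Student's-$t$. I would therefore justify the clause by appeal to the well-documented fact (see the closure discussion in \cite{shah2014student}) that the StP family is closed under far fewer operations than the Gaussian family, so that through the chain of operations used here only the first two moments and the DoF propagate cleanly, and $g$ must consequently be carried as a generic SP with the stated mean, covariance, and DoF. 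Pinning this ``restricted closure'' statement down at exactly the level of generality the later lemmas need is where most of the care would go.
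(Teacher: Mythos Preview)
Your moment and degrees-of-freedom computations are correct, but the route is considerably more elaborate than the paper's. The paper's proof simply notes that $\epsilon(k)$ has mean $m_{\epsilon}(k)$ and covariance $\tfrac{\nu_{\epsilon}}{\nu_{\epsilon}-2}\kappa_{\epsilon}(k,k')$, then reuses verbatim the linearity-of-expectation and bilinearity-of-covariance calculations already recorded for the GP case in Appendix~\ref{appendx:prof_lemma_linear_Function_GP} (equations \eqref{eq:GP_linearTrasf_mean}--\eqref{eq:GP_linearTrasf_cova}), and finally asserts that a linear map does not alter the DoF. No scale-mixture representation is invoked; the whole argument is two lines. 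Your hierarchical derivation buys a principled reason \emph{why} the DoF is inherited (the scalar $r$ is untouched), which the paper merely asserts, at the cost of more machinery.

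Where your proposal runs into a genuine problem is the third paragraph. Your own scale-mixture argument actually proves the opposite of what you are trying to justify: once you write $\mathbf{g}=(\mathbf{I}_n\otimes\mathbf{H})\mathbf{m}_{\epsilon}(\boldsymbol{t})+\sqrt{r}\,(\mathbf{I}_n\otimes\mathbf{H})\mathbf{z}$ with the \emph{same} scaled inverse-$\chi^{2}$ variable $r$ and a zero-mean Gaussian $(\mathbf{I}_n\otimes\mathbf{H})\mathbf{z}$, you have exhibited precisely the hierarchical form that \emph{defines} a multivariate Student's-$t$ vector, for every finite $\boldsymbol{t}$. That is, your derivation shows $g$ \emph{is} an StP with kernel $\mathbf{H}\kappa_{\epsilon}(k,k')\mathbf{H}^{\top}$ and DoF $\nu_{\epsilon}$, not merely a generic SP. Your attempt to rescue the ``generally no more an StP'' clause by pointing to the left-inverse of $\mathbf{H}$ and the convolution with the independent noise process is a non-sequitur for this lemma: those operations are the subject of Lemmas~\ref{lemma_StP} and~\ref{lemma_StP_plus_StP}, not of the linear map $g=\mathbf{H}\epsilon$ treated here. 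The paper's own proof, for what it is worth, makes no attempt to justify that clause either; it records only the moments and the DoF and moves on.
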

		\begin{proof}
			It is easy to see that the mean, covariance and DoF of ${\epsilon}(k)$ are $m_{\epsilon}(k),\frac{\nu_{\epsilon}}{\nu_{\epsilon}-2} \kappa_{\epsilon}(k, k'), \nu_{\epsilon}$, respectively. Based on this, the mean and covariance of the transformed StP can be calculated similarly as shown in \eqref{eq:GP_linearTrasf_mean} and \eqref{eq:GP_linearTrasf_cova}. 
			Moreover, the linear transformation does not change the DoF of the StP. 
		\end{proof}
		
		\begin{lemma} \label{lemma_StP}
			Assuming that the TSP $\mathcal{SP}_{x}$ is an StP, that is,  $f(t)\sim \mathcal{TP}(m(t),\kappa(t,t'),\nu )$ and further assuming that the fitted T-FoT $F(t;\mathbf{C}_k)$ is chosen as the mean function $m(t)$ of the learning TSP $\mathcal{SP}_{x}$, the RSP $\mathcal{SP}_\epsilon$ is a zero mean SP but not an StP. 
		\end{lemma}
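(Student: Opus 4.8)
The plan is to separate the statement into its two claims --- that $\mathcal{SP}_\epsilon$ is a zero-mean SP, and that it is nonetheless not an StP --- and to dispatch them in turn, since only the second is delicate. For the first I would argue exactly as in the GP counterpart, Lemma~\ref{lemma_GP-mt}: adopting the fitted trend $F(t;\mathbf{C}_k)$ as the mean function $m(t)$ of the StP model for $f$, linearity of expectation gives $\mathbb{E}[\epsilon(t)] = \mathbb{E}[f(t)] - m(t) = 0$ for every $t$, and since an affine functional of an SP is again an SP, $\epsilon(\cdot)$ is an SP. This is the same mean bookkeeping used for Lemma~\ref{lemma_GP-mt}, so I would keep it to a couple of lines.

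The substance of the proof is the ``not an StP'' claim, and here the decisive observation --- and the reason the GP argument does not transfer --- is that $F(t;\mathbf{C}_k)$ is not a fixed function of time but is computed from the windowed measurements $\mathbf{y}_{\mathbf{t}_k} = {\mathbf{H}} f(\mathbf{t}_k) + \mathbf{v}_{\mathbf{t}_k}$ via \eqref{eq:C_learning}. I would take the (weighted) least-squares instantiation of \eqref{eq:C_learning}, in which $\mathbf{C}_k$ is a linear map of $\mathbf{y}_{\mathbf{t}_k}$; then for any finite set of query times $\boldsymbol{s}=\{s_1,\dots,s_n\}$ the stacked residual $[\epsilon(s_1),\dots,\epsilon(s_n)]^{\top}$ is an affine image of $\big[f(\boldsymbol{s}),\,f(\mathbf{t}_k),\,\mathbf{v}_{\mathbf{t}_k}\big]$ and can therefore be written as $\boldsymbol{a}+\boldsymbol{b}$, where $\boldsymbol{a}$ is a linear functional of the state process $f$ and $\boldsymbol{b}$ is a linear functional of the measurement-noise process $\mathbf{v}$, the two being independent under the standard assumption that state and measurement noise are independent. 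Using the StP property of $f$ (DoF $\nu$), Assumption~A4 for $\mathbf{v}$ (DoF $\nu_v$), and the linear-transformation behaviour recorded in Lemma~\ref{lemma_linear_Function_StP}, I would conclude that $\boldsymbol{a}$ and $\boldsymbol{b}$ are each (possibly degenerate) multivariate Student's-$t$ vectors, with degrees of freedom $\nu$ and $\nu_v$ that differ in general. The last step is then to invoke that the sum of two independent multivariate Student's-$t$ vectors is, in general, not multivariate Student's-$t$: through the Gaussian scale-mixture representation $\boldsymbol{a}=\sqrt{\tau_1}\,\boldsymbol{z}_1$, $\boldsymbol{b}=\sqrt{\tau_2}\,\boldsymbol{z}_2$ with Gaussian $\boldsymbol{z}_i$ and independent inverse-gamma mixing variables $\tau_1,\tau_2$, the conditional covariance $\tau_1\Sigma_1+\tau_2\Sigma_2$ given $(\tau_1,\tau_2)$ is not of the scalar-times-fixed-matrix form a scale mixture requires, and $\tau_1+\tau_2$ is not inverse-gamma even when $\nu=\nu_v$; hence the finite-dimensional laws of $\epsilon(\cdot)$ violate the defining property of an StP. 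This is the same root obstruction that keeps Lemma~\ref{lemma_GP_plus_GP} (sum of independent GPs is a GP) from having an StP analogue.

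The step I expect to be the main obstacle is precisely this last one --- certifying that $\boldsymbol{a}+\boldsymbol{b}$ admits no multivariate Student's-$t$ law, especially in the borderline case $\nu=\nu_v$, where a tail-index argument alone does not settle it; I would close it either by comparing characteristic functions or by checking that the (excess) kurtosis of $\boldsymbol{a}+\boldsymbol{b}$ cannot match that of any multivariate $t$ with tail index $\min(\nu,\nu_v)$. A secondary point needing care is the degenerate configuration in which the least-squares weights annihilate the noise contribution ($\boldsymbol{b}=\boldsymbol{0}$), in which $\epsilon$ does reduce to a bona fide StP; this is why the conclusion should be read generically, and I would make the genericity explicit by exhibiting one admissible choice of ${\mathbf{H}}$, window length, and kernel for which $\boldsymbol{b}\neq\boldsymbol{0}$ with positive probability. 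If \eqref{eq:C_learning} is solved instead with a non-quadratic loss, $\mathbf{C}_k$ depends nonlinearly on the data and the same conclusion follows a fortiori.
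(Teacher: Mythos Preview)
Your approach diverges substantially from the paper's. The paper's proof is essentially two lines: it treats $F(t;\mathbf{C}_k)=m(t)$ as a \emph{deterministic} shift (exactly paralleling the GP Lemma~\ref{lemma_GP-mt}), computes $\mathbb{E}[\epsilon(t)]=0$ just as you do, and for the ``not an StP'' part simply cites \cite{Degroot2013probability} for the assertion that the centred process is a shifted Student's~$t$ rather than a Student's~$t$. There is no appeal to the data-dependence of $\mathbf{C}_k$, no splitting of $\epsilon$ into pieces driven by $f$ and by $\mathbf{v}$, and no sum-of-independent-$t$'s argument.

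You, by contrast, build the obstruction entirely on the randomness of the fitted trend: writing $\epsilon=\boldsymbol{a}+\boldsymbol{b}$ with $\boldsymbol{a}$ a linear functional of $f$ and $\boldsymbol{b}$ of the noise $\mathbf{v}$, and then invoking non-closure of the $t$ family under independent sums --- i.e., the mechanism behind Lemma~\ref{lemma_StP_plus_StP}, not Lemma~\ref{lemma_StP}. That is a defensible argument for the \emph{realised} residual, but it answers a different question from the one the paper is posing: the hypothesis ``$F(t;\mathbf{C}_k)$ is chosen as the mean function $m(t)$'' is read, consistently with the GP analogue, as conditioning on the fit, so the shift is fixed and your noise term $\boldsymbol{b}$ and all the edge-case analysis around it are beside the paper's point. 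What your route buys is a concrete and checkable obstruction; what the paper's route buys is brevity --- though it is worth flagging that a deterministic location shift of a multivariate~$t$ is again multivariate~$t$, so the paper's cited obstruction appears to rest on a narrower notion of Student's~$t$ than the location--scale definition given in Section~\ref{sec:background}.
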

		\begin{proof} The key conclusion that the RSP is a shifted version of the original Student's \(t\) distribution but not an StP can be found in  \cite[Ch.3]{Degroot2013probability}. The resulted mean function can be the same calculated as in \eqref{eq:mean_f-m=0}. 
			Here, we only calculate its mean function 
		\end{proof}
		
		\begin{lemma} \label{lemma_StP_plus_StP}
			The sum $ e\left( t \right) =g \left( t \right) +v\left( t \right) $ of two independent StPs ${g}(k) \sim \mathcal{TP}(m_{g}(k),\kappa_{g}(k, k'),\nu_{g})$ and $ v(k)\sim \mathcal{TP}(m_{v}(k),\kappa_{v}(k, k'),\nu_{v}(k))$ is an SP with mean $m\left( t \right) = m_{g}\left( t \right) +m_{v}\left( t \right) $ but not an StP. 
		\end{lemma}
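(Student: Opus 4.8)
The plan is to split the statement into its two assertions — that $e(t)=g(t)+v(t)$ is a bona fide SP with mean function $m_g(t)+m_v(t)$, and that it fails, in general, to be an StP — and to treat the first by elementary moment arguments and the second by invoking the non-closure of the Student's-$t$ family under convolution.

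For the first assertion I would work at the level of finite-dimensional distributions. Fixing times $t_1,\dots,t_n$ and setting $\mathbf{e}=[e(t_1),\dots,e(t_n)]^{\top}=\mathbf{g}+\mathbf{v}$, where $\mathbf{g}$ and $\mathbf{v}$ are the independent multivariate-$t$ vectors supplied by the two StPs, these finite-dimensional laws are Kolmogorov-consistent because they are the sum of two consistent families, so $\{e(t)\}$ is a well-defined SP. Since $\nu_g,\nu_v>2$ guarantees finite first moments, linearity of expectation gives $\mathbb{E}[\mathbf{e}]=\mathbb{E}[\mathbf{g}]+\mathbb{E}[\mathbf{v}]$, i.e. $m(t)=m_g(t)+m_v(t)$; and independence gives the additive second moments $\mathbb{C}[e(t),e(t')]=\mathbb{C}[g(t),g(t')]+\mathbb{C}[v(t),v(t')]$, which in the kernel parametrization used here is $\kappa_g(t,t')+\kappa_v(t,t')$ — a covariance, not a certificate of Student's-$t$-ness.

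For the second assertion — the part I expect to be the real obstacle — I would use the Gaussian scale-mixture representation of the multivariate $t$: $\mathbf{g}\stackrel{d}{=}\mathbf{m}_g+\sqrt{\tau_g}\,\mathbf{z}_g$ and $\mathbf{v}\stackrel{d}{=}\mathbf{m}_v+\sqrt{\tau_v}\,\mathbf{z}_v$ with $\mathbf{z}_g\sim\mathcal{N}(\mathbf{0},\Sigma_g)$, $\mathbf{z}_v\sim\mathcal{N}(\mathbf{0},\Sigma_v)$, and $\tau_g,\tau_v$ independent inverse-gamma weights of shapes $\nu_g/2$ and $\nu_v/2$. Conditioning on $(\tau_g,\tau_v)$ exhibits $\mathbf{e}$ as a Gaussian mixture with random covariance $\tau_g\Sigma_g+\tau_v\Sigma_v$; because this mixing is driven by two independent weights with distinct shape parameters, it cannot be collapsed into a scale mixture governed by a single inverse-gamma weight, which is exactly the form of every multivariate $t$. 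Equivalently one can compare characteristic functions — that of $\mathbf{e}$ is a product of two modified-Bessel-type factors of different orders $\nu_g/2$ and $\nu_v/2$, not the single-Bessel form of any $t$ — or, at the scalar level, simply invoke the classical fact that a convolution of two Student's-$t$ densities is not a Student's-$t$ density, as recorded in \cite[Ch.~3]{Degroot2013probability} and already used for Lemma~\ref{lemma_StP}. The delicate point is pinning the contradiction down precisely (e.g. via mismatched higher moments or tail indices when $\nu_g\neq\nu_v$, and handling the edge cases); given the precedent of Lemma~\ref{lemma_StP}, the natural route here is to cite the same non-closure fact and then record the mean function $m(t)=m_g(t)+m_v(t)$ as computed above.
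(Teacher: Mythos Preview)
Your proposal is correct and takes essentially the same route as the paper: the non-StP claim is dispatched by citing the non-closure of the Student's-$t$ family under convolution from \cite[Ch.~3]{Degroot2013probability}, and the mean is obtained by linearity exactly as in \eqref{eq:e-mean_mg+mv}. Your scale-mixture and characteristic-function sketches are sound elaborations that go beyond what the paper records; one caution is that your side remark giving the covariance as $\kappa_g+\kappa_v$ is not part of the lemma and, in the parametrization the paper uses in Appendix~\ref{appendx:proof_lemma_StP_plus_StP}, each kernel carries a $\nu/(\nu-2)$ factor before summing.
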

		\begin{proof} 
			The key conclusion that the sum of two independent StP is no more an StP can be found in \cite[Ch.3]{Degroot2013probability}. The mean function can be calculated the same as in \eqref{eq:e-mean_mg+mv} and the calculation of the covariance is given in Appendix \ref{appendx:proof_lemma_StP_plus_StP}.  
		\end{proof}
		
		
		To facilitate translation and decomposition of StP, we propose the following three moment matching methods for SP approximation while the DoF that does not have a specific form can be determined in a straightforward and ad-hoc way. 
		
		
		
		\begin{proposition}
			For the case described in Lemma \ref{lemma_linear_Function_StP}, the approximate StP ${\hat{g}}(k) \sim \mathcal{TP}(m_{g}(k),\kappa_{g}(k, k'),\nu_{g})$ that has the same moments as the RSP $\mathcal{SP}_\epsilon$, ${\mathbf{H}} \epsilon(k)$, is given as follows 
			\begin{align}
				m_{g}(k) & = {\mathbf{H}} m_{\epsilon}(k) ,\nonumber \\
				\kappa_{g}(k, k') & = {\mathbf{H}} \kappa_{\epsilon}(k, k'){\mathbf{H}}^{\top}, \nonumber \\
				\nu_{g} &= \nu_{\epsilon},
			\end{align}
		\end{proposition}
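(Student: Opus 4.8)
The plan is to treat the claim as a pure moment-matching computation with no approximation-theoretic content beyond Lemma \ref{lemma_linear_Function_StP}: I would read off the mean, the covariance and the DoF of the true process $g(k)=\mathbf{H}\epsilon(k)$, write the same three quantities for the generic StP family $\hat{g}(k)\sim\mathcal{TP}(m_g(k),\kappa_g(k,k'),\nu_g)$, equate them term by term, and solve for $m_g$, $\kappa_g$ and $\nu_g$.

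First I would invoke Lemma \ref{lemma_linear_Function_StP} verbatim: $g(k)=\mathbf{H}\epsilon(k)$ is an SP whose mean function is $\mathbf{H}m_\epsilon(k)$, whose covariance function is $\tfrac{\nu_\epsilon}{\nu_\epsilon-2}\mathbf{H}\kappa_\epsilon(k,k')\mathbf{H}^\top$, and whose DoF equals $\nu_\epsilon$. Next, using the StP parametrization recorded in the proof of Lemma \ref{lemma_linear_Function_StP} (see also Section \ref{sec:background}), a process $\hat{g}(k)\sim\mathcal{TP}(m_g(k),\kappa_g(k,k'),\nu_g)$ has mean $m_g(k)$, covariance $\tfrac{\nu_g}{\nu_g-2}\kappa_g(k,k')$ and DoF $\nu_g$. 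Matching the means gives $m_g(k)=\mathbf{H}m_\epsilon(k)$; matching the DoF gives $\nu_g=\nu_\epsilon$; and matching the covariances gives $\tfrac{\nu_g}{\nu_g-2}\kappa_g(k,k')=\tfrac{\nu_\epsilon}{\nu_\epsilon-2}\mathbf{H}\kappa_\epsilon(k,k')\mathbf{H}^\top$. Substituting $\nu_g=\nu_\epsilon$ into the last identity cancels the common scalar $\tfrac{\nu_\epsilon}{\nu_\epsilon-2}$, which is finite and nonzero because $\nu_\epsilon>2$, and leaves $\kappa_g(k,k')=\mathbf{H}\kappa_\epsilon(k,k')\mathbf{H}^\top$ — exactly the asserted expression.

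I do not expect a genuine obstacle here; the one place requiring care is the bookkeeping of the $\nu$-dependent scalar that links an StP's kernel to its true covariance function, since a reciprocal slip there would spoil the kernel identity, and the clean cancellation hinges precisely on the DoF being preserved, $\nu_g=\nu_\epsilon$. I would also add, as a caveat rather than a difficulty, that ``having the same moments'' is to be read as matching the mean, the covariance and the DoF only: since $g=\mathbf{H}\epsilon$ is in general not an StP (Lemma \ref{lemma_linear_Function_StP}), no member of the StP family can reproduce its higher-order moments, so this three-quantity match is both the natural criterion and the precise sense of the approximation. Finally, the DoF is assigned by matching rather than obtained from a moment equation of its own, which is legitimate and, in this linear-transformation case, in fact exact, because Lemma \ref{lemma_linear_Function_StP} already identifies the DoF of $g$ as $\nu_\epsilon$.
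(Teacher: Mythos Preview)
Your proposal is correct and follows precisely the same approach as the paper, which simply states that the result is straightforward based on Lemma~\ref{lemma_linear_Function_StP}. You have carefully spelled out the moment-matching that the paper leaves implicit, including the clean cancellation of the $\nu_\epsilon/(\nu_\epsilon-2)$ factor once $\nu_g=\nu_\epsilon$ is fixed.
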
 
		\begin{proof}
			The result is straightforward based on Lemma \ref{lemma_linear_Function_StP}.
		\end{proof}

		\begin{proposition}
			For the case described in Lemma \ref{lemma_StP}, the approximate StP that has the same moments as the RSP $\mathcal{SP}_\epsilon \triangleq\{ \epsilon(t):=f(t)-m(t); t\in \mathbb{R}^+\}$ is given as follows 
			\begin{equation}
				\hat{\epsilon}(t) \sim  \mathcal{TP}(\mathbf{0},\kappa_{\epsilon}\left( t, t' \right) ,\nu_{\epsilon} ),
			\end{equation}
			where $\kappa_{\epsilon}\left( t, t' \right) =\kappa \left( t, t' \right), \nu_{\epsilon}  = \nu $. 
		\end{proposition}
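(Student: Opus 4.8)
The plan is to match the three objects that specify an StP — its mean function, its covariance kernel, and its DoF — between the exact RSP $\epsilon(t)=f(t)-m(t)$ and the candidate process $\hat{\epsilon}(t)$, relying on the moment-to-parameter correspondence recorded in Section~\ref{sec:background}: a process $f(t)\sim\mathcal{TP}(m(t),\kappa(t,t'),\nu)$ has, at any finite index set $\boldsymbol{t}$, the finite-dimensional law $\mathcal{T}(\mathbf{m}(\boldsymbol{t}),\frac{\nu-2}{\nu}\mathbf{K}(\boldsymbol{t},\boldsymbol{t}),\nu)$, whose mean is $\mathbf{m}(\boldsymbol{t})$, whose covariance is $\mathbf{K}(\boldsymbol{t},\boldsymbol{t})$, and whose DoF is $\nu$. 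Hence, to exhibit the moment-matched StP, it suffices to read off these three quantities for $\epsilon(t)$ and then declare the StP carrying exactly those descriptors.

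First I would treat the mean: since $F(t;\mathbf{C}_k)$ is taken as the mean function $m(t)$ of $\mathcal{SP}_{x}$, Lemma~\ref{lemma_StP} already gives that $\epsilon(t)$ is a zero-mean SP, i.e. $\mathbb{E}[\epsilon(t)]=\mathbb{E}[f(t)]-m(t)=0$, which fixes the first argument of $\hat{\epsilon}$ as $\mathbf{0}$. Next I would compute the covariance: because $m(\cdot)$ is deterministic, subtracting it does not alter second central moments, so $\mathbb{C}[\epsilon(t),\epsilon(t')]=\mathbb{C}[f(t)-m(t),f(t')-m(t')]=\mathbb{C}[f(t),f(t')]=\kappa(t,t')$, whence $\kappa_{\epsilon}(t,t')=\kappa(t,t')$; this is a one-line consequence of bilinearity of the covariance.

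The DoF is the only point requiring an argument beyond elementary moment algebra, since equality of the first two moments alone does not pin down $\nu_{\epsilon}$. Here I would use the structural fact already invoked in the proof of Lemma~\ref{lemma_StP} (following \cite[Ch.3]{Degroot2013probability}) that each finite-dimensional law of $\epsilon$ is a pure location shift of the corresponding multivariate Student's-$t$ law of $\mathbf{f}$; a location shift leaves the shape/tail parameter untouched, so the canonical (and, as the text already concedes, otherwise ad-hoc) choice is $\nu_{\epsilon}=\nu$. Equivalently, writing the StP in scale-mixture form $f=m+\sqrt{u}\,z$ with $z$ a zero-mean GP and $u$ the inverse-gamma mixing variable encoding the DoF $\nu$, one has $\epsilon=\sqrt{u}\,z$, which retains the same mixing variable and hence the same $\nu$. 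Collecting the three results gives $\hat{\epsilon}(t)\sim\mathcal{TP}(\mathbf{0},\kappa(t,t'),\nu)$, as claimed.

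The main obstacle is precisely this DoF step: unlike the GP analogue (Lemma~\ref{lemma_GP-mt}), where ``zero mean plus unchanged kernel'' is the entire content, here one must justify extending the moment-matching convention to the tail index and argue that $\nu$ transfers unchanged under the mean shift; the mean and covariance identities themselves are immediate.
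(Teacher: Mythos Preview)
Your proposal is correct and matches the paper's approach: the paper's entire proof is the one-line observation that ``the shift does not change the shape of the distribution,'' which is precisely your DoF argument via location-shift invariance. Your treatment of the mean and covariance is simply a more explicit unpacking of what the paper takes as immediate.
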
 
		\begin{proof} 
			The result is straightforward, since the shift does not change the shape of the distribution. 
		\end{proof}

		\begin{proposition} \label{proposition:StP_plus_StP}
			For the case described in Lemma \ref{lemma_StP_plus_StP}, the approximate StP that has the same moments as the SP $\mathcal{SP}_e$  is given as follows 
			\begin{equation}
				\hat{e}(k) \sim  \mathcal{TP}(m_e(k),\kappa_{e}(k, k'), \nu_e ) ,
			\end{equation}
			where 
			\begin{align} 
				m_{e}(k) & = m_{g}(k)+m_{v}(k) , \label{eq:StP+StP:m_{e}(k)}  \\ 
				\kappa_{e}(k, k') & = \frac{\nu_e-2}{\nu_e} \Big( \frac{\nu_g\kappa_{g} \left( k, k' \right) }{\nu_g-2} + \frac{\nu_v \kappa_{v}(k, k')}{\nu_v-2}\Big) , \label{eq:StP+StP:k_{e}(k)} 
			\end{align}
			and the DoF is calculated by the mean of the individual DoFs in our approach
			\begin{equation}
				\nu_e = \frac{\nu_g + \nu_v}{2} .
			\end{equation}
		\end{proposition}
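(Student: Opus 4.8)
The plan is to read ``same moments'' as matching only the first two moments (mean function and covariance kernel): by Lemma~\ref{lemma_StP_plus_StP} the sum $e(k)=g(k)+v(k)$ is a genuine, non-$t$ SP whose higher-order moments cannot be reproduced by any single StP, so the DoF $\nu_e$ is necessarily left as a design parameter. The ingredient I would invoke is the fact, already recorded in the proof of Lemma~\ref{lemma_linear_Function_StP}, that a process distributed as $\mathcal{TP}(m,\kappa,\nu)$ has mean $m$ and covariance kernel $\tfrac{\nu}{\nu-2}\kappa$. First I would write down the first two moments of $e$: by linearity of expectation its mean is $m_g(k)+m_v(k)$, and, because $g$ and $v$ are independent, the cross-covariance terms vanish so its covariance kernel is the sum of the individual covariance kernels, $\tfrac{\nu_g}{\nu_g-2}\kappa_g(k,k')+\tfrac{\nu_v}{\nu_v-2}\kappa_v(k,k')$ --- precisely the quantity computed in Appendix~\ref{appendx:proof_lemma_StP_plus_StP} for Lemma~\ref{lemma_StP_plus_StP}.

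Next I would impose the two matching equations on the candidate $\hat e(k)\sim\mathcal{TP}(m_e,\kappa_e,\nu_e)$, whose mean is $m_e$ and whose covariance kernel is $\tfrac{\nu_e}{\nu_e-2}\kappa_e$. Equating means gives \eqref{eq:StP+StP:m_{e}(k)} immediately. Equating covariance kernels gives $\tfrac{\nu_e}{\nu_e-2}\kappa_e(k,k')=\tfrac{\nu_g}{\nu_g-2}\kappa_g(k,k')+\tfrac{\nu_v}{\nu_v-2}\kappa_v(k,k')$, and solving for $\kappa_e$ yields \eqref{eq:StP+StP:k_{e}(k)}. Since $\nu_e$ is not pinned down by moment matching, I would adopt the stated ad-hoc choice $\nu_e=(\nu_g+\nu_v)/2$ and then verify admissibility: $\nu_g>2$ and $\nu_v>2$ force $\nu_e>2$, hence $\tfrac{\nu_e-2}{\nu_e}>0$, so $\kappa_e$ is a strictly positive combination of the positive semidefinite kernels $\kappa_g$ and $\kappa_v$ and is therefore itself a valid covariance kernel, making $\hat e(k)\sim\mathcal{TP}(m_e,\kappa_e,\nu_e)$ well defined.

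I do not anticipate a real obstacle here; the whole argument is moment bookkeeping resting on the covariance formula for the $\mathcal{TP}$ parametrization and on the independence of $g$ and $v$. The one point that needs care is conceptual rather than computational --- being explicit that moment matching can constrain only the mean and covariance (the higher moments of the non-$t$ sum being over-determined), so that the expression for $\nu_e$ must be presented as a modelling choice accompanied by the admissibility check above rather than as a consequence of the matching equations.
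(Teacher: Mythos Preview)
Your proposal is correct and follows essentially the same route as the paper: both arguments compute the mean of $e$ by linearity and its covariance as the sum $\tfrac{\nu_g}{\nu_g-2}\kappa_g+\tfrac{\nu_v}{\nu_v-2}\kappa_v$ via independence (exactly as in Appendix~\ref{appendx:proof_lemma_StP_plus_StP}), and then convert back to the StP kernel parametrization using the $\tfrac{\nu}{\nu-2}$ relation to obtain \eqref{eq:StP+StP:k_{e}(k)}. Your write-up is in fact more complete than the paper's terse proof, since you make explicit that $\nu_e$ is unconstrained by first- and second-moment matching and you add the admissibility check that $\kappa_e$ remains a valid kernel.
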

		\begin{proof}
			Eqs. \eqref{eq:StP+StP:m_{e}(k)} and \eqref{eq:StP+StP:k_{e}(k)} are the same as those given in \eqref{eq:e-mean_mg+mv} and \eqref{eq:e-cov_kg+kv}, respectively. The calculation of the covariance functions is given in Appendix \ref{appendx:proof_lemma_StP_plus_StP}. 
		\end{proof}
		
		Given that the measurement noise covariance function $\kappa_v \left( t, t' \right)$ and the DoF are available beforehand, 
		and $\kappa_e \left( t, t' \right)$ can be learned from the data as shown next, the covariance kernel of the TSP $\mathcal{SP}_{x}$ can be calculated as in \eqref{eq:e_kernel}, according to the results of Proposition \ref{proposition:StP_plus_StP}.  
		We use the RBF kernel (\ref{eq:RBF kernel}) again, the same as in section \ref{RGPlearning} to learn the hyperparameters $\boldsymbol{\eta }_k \triangleq \{\varepsilon_{m}, \zeta\}$ simultaneously with estimating the values of function $e(\cdot)$. Following Propositions 1 and 2, similar recursive update framework as that of the RGP* is used, which basically substitutes StP as a GP replacement algorithm. Detailed derivation can be found in \cite{shah2014student,tracey2018upgrading} and is omitted here. Then, the updated T-FoT is the same calculated as in \eqref{eq:estimated_f_mean_GP} and its covariance at time $k$ is given as 
		\begin{equation}
			\mathbb{C}\left[ \hat{f}_k\left( t \right) \right] = \frac{\nu +\lVert \mathbf{y}_k \rVert _{\mathbf{J}_{\mathbf{t}_k}}-2}{\nu +\left| \mathbf{t}_k \right|-2}\left( \kappa\left( k,k \right) -\lVert \mathbf{k}_k \rVert _{\mathbf{J}_{\mathbf{t}_k}} \right) ,
			\label{eq:estimated_f_cov_StP}
		\end{equation}
		where 
		\eqref{eq:StPR_covariance} was used in \eqref{eq:estimated_f_cov_StP}, $\mathbf{k}_k$ is calculated by \eqref{eq:StP+StP:k_{e}(k)}.

		\section{Simulations}
		\label{sec:Simulations}
		The proposed approaches are evaluated and compared with two highly relevant data-driven approaches in four demanding maneuvering target scenarios under two different types of colored measurement noise. Note that since no model information is available beforehand about the target dynamics and even the trajectory may not be produced by an SSM, no traditional filters were simulated for comparison.  
		
		The first comparison method is the so-called GPMT \cite{aftab2020learning}. It assumes that the target trajectory follows a zero-mean sparse GP of which the covariance kernel is learned from the measurement. That is, unlike our DSD approach, it does neither decompose the deterministic trend part of the trajectory nor perform any T-FoT fitting. 
		For some reason or another, this tracker exhibits high sensitivity to data size $d$ \cite{aftab2020learning}, requiring empirical optimization through multiple experiments in various scenarios. In contrast, our approaches work well by simply setting $d=5$ in all scenarios. We conjecture that this is because our GP/StP learning is based on the DSD and only regarding the minor fitting residual part which is much less affected by the target maneuvering of which the major part, that is, the trend of the trajectory, has been handled by the deterministic T-FoT fitting. 
		The second comparison method, called T-FoT, performs the second-order polynomial fitting only as detailed in \cite{li2018joint}, which produces the estimate of T-FoT without any uncertainty. This amounts to the first part of our approaches here which additionally model and learn the residual fitting error by a GP or by an StP, namely T-FoT-GP and T-FoT-StP, respectively. 

		The position root mean square errors (RMSE) as well as the average RMSE (ARMSE) is used as the metrics, which are defined as follows:
		\begin{equation}
			\mathrm{RMSE}_\mathbf{x}^k = \sqrt{\frac{1}{N} \sum_{i=1}^{N} \left( \mathbf{x}_{i,k}-\hat{\mathbf{x}}_{i,k} \right)^2 },
		\end{equation}
		\begin{equation}
			\mathrm{ARMSE} = \frac{1}{T} \sum_{k=1}^{T} \mathrm{RMSE}_\mathbf{x}^k, 
		\end{equation}
		where $N=100$ is the number of simulation trials, 
		$\mathbf{x}_{i,k}$ and $\hat{\mathbf{x}}_{i,k}$ are the true position and its estimate at time $k$ of $i$-th trial, respectively, and $T$ is the total number of measurement/sampling steps. 
		
		
		\subsection{Simulation Scenarios}
		The simulation is carried out in the following four challenging single-sensor single-target scenarios, as shown in Fig. \ref{fig:scenarios}, where the former two use a simple target motion pattern, while the latter two use highly maneuvering trajectories. 
		
		\begin{enumerate}
			\item[S1-]Gradual coordinated turn (CT): the target trajectory combines the CT (15°/s for 10s) with the constant velocity (CV) motion where the initial velocity $v_0$ is randomly initialized within the range of $150\leq v_0\leq 250\ m/s$.  
			\item[S2-]Sharp CT: similar CT motion to scenario S1 but with increased turn rates ($30{^\circ}/{s}$ for $9s$), showing a higher agile dynamics. 
			\item[S3-]Gradual WPV-CT turn: this scenario integrates a Wiener process velocity 
			and a CT model using no position and velocity noise but zero-mean Gaussian turn rate noise with covariance 0.15.  
			\item[S4-]GP: the target motion is modeled using two GPs for the $x$- and $y$- coordinates. Each GP employs a mean function given by a second-order polynomial 
			and an RBF kernel covariance with constant hyperparameters variance $\sigma _{gp}^{2}=1\ m^2 $ and length scale $\zeta _{gp}=1\ s$.
		\end{enumerate}
		
		
		\begin{figure}[!htbp]
			\centering
			\includegraphics[width=0.44\textwidth]{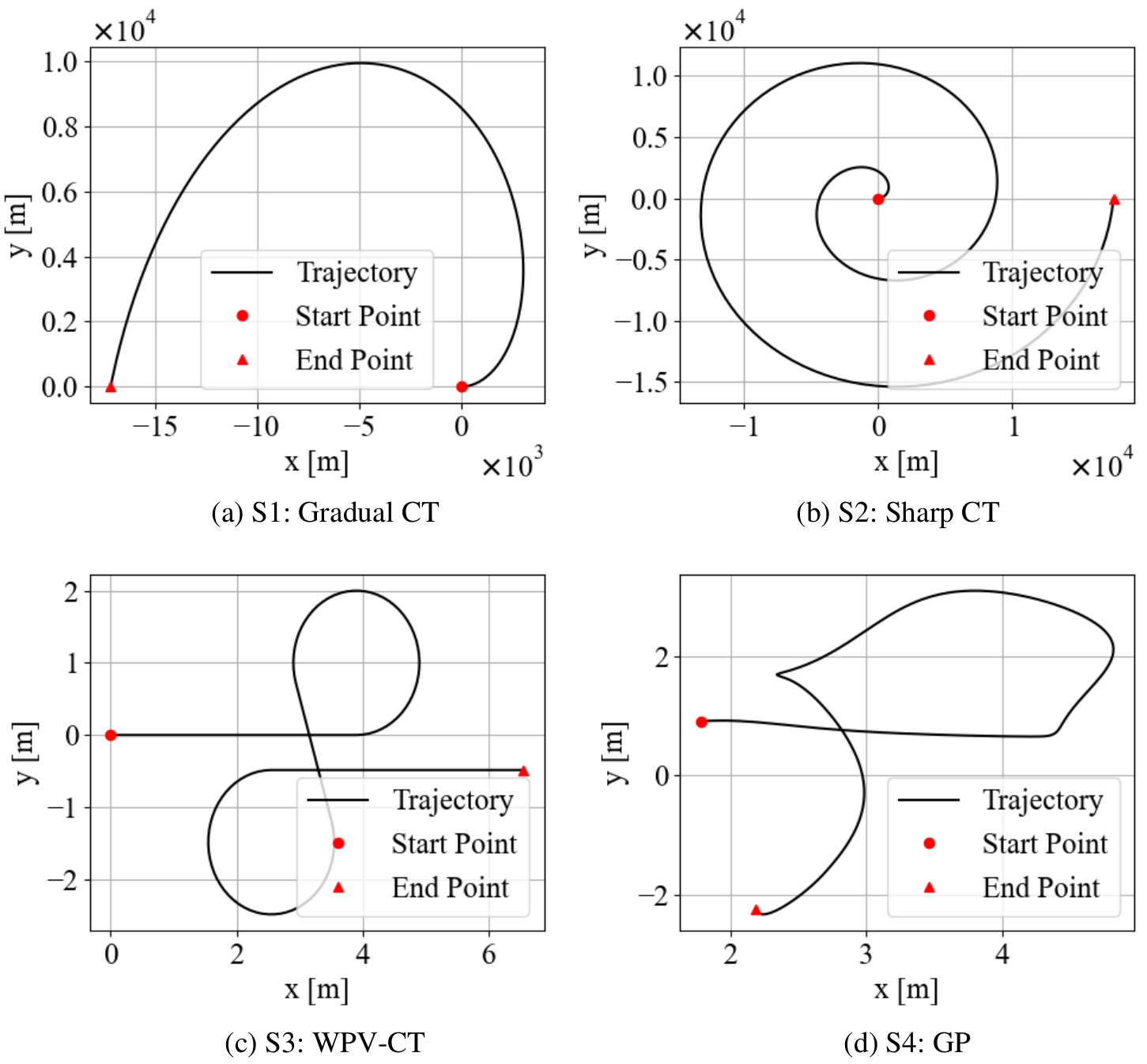}
			\caption{Sample trajectories in four scenarios.}
			\label{fig:scenarios}
		\end{figure}
		

		\begin{figure}[!htbp]
			\centering
			\includegraphics[width=0.92\linewidth]{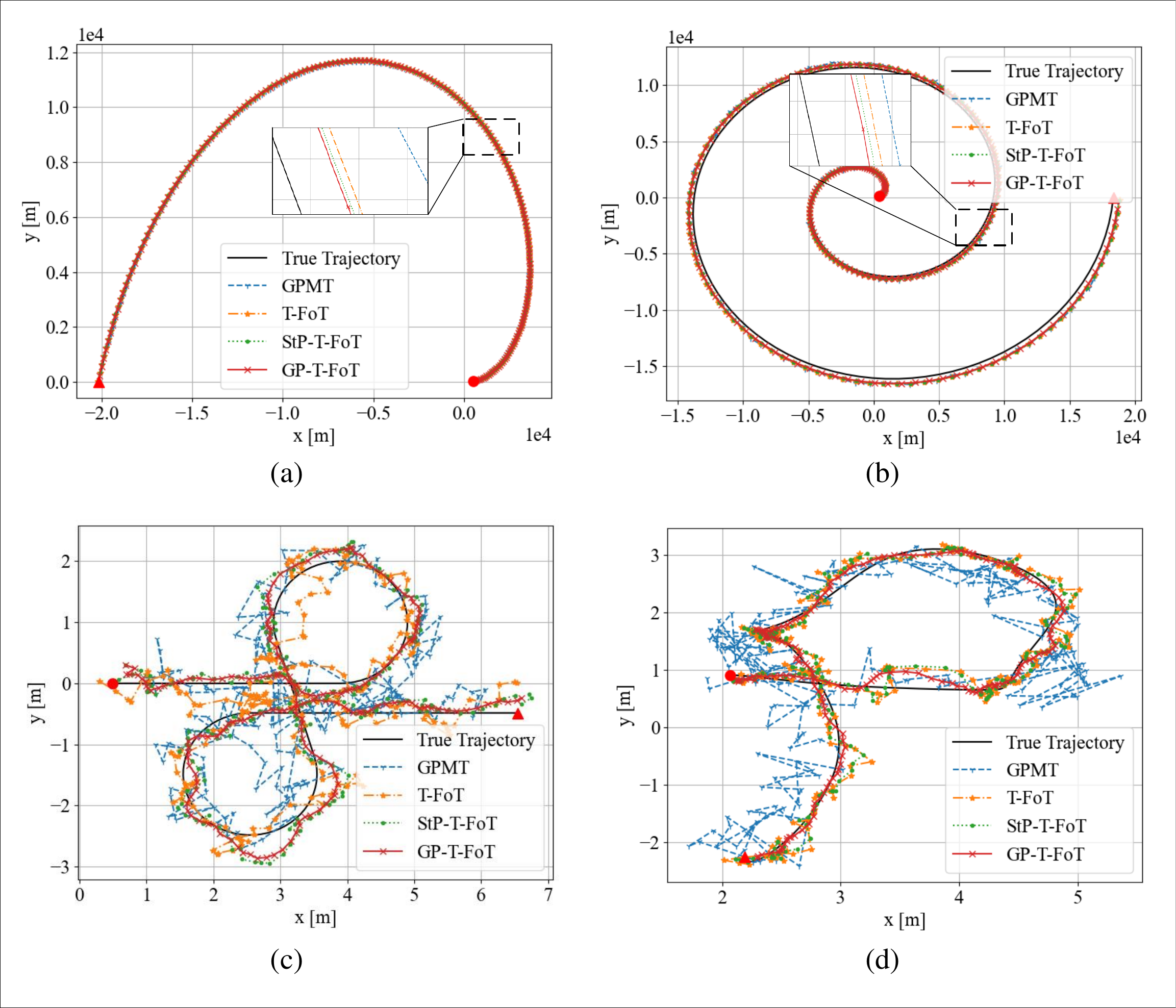}
			\caption{Trajectory estimates in one trial using GP colored measurement noise.}
			\label{fig:Result}
		\end{figure}
		
		Two types of measurement noise that are directly applied on the position measurements are considered as follows. 
		\begin{itemize}
			\item \textbf{GP colored noise.} The measurement noise comes from $\mathcal{G}\mathcal{P}\left( \boldsymbol{0},\kappa \left( t,t' \right) \right)$ with an RBF covariance kernel. The variance of the noise kernel in scenarios S1 and S2 is $\sigma_m^2=25\ m^2$, and the length scale is $\zeta=1\ s $. For scale scenarios S3 and S4 the noise parameters are adjusted to $\sigma_m^2=0.25\ m^2$ and $\zeta=0.1\ s$.  
			\item \textbf{Heavy-tailed colored noise.} The measurement noise is generated through two GPs with the RBF kernel. The input time points are randomly selected from a continuous-time grid with varying intervals between the points. The length scale parameter of the noise kernel is set to $\zeta = 1\ {s}$ scenarios S1 and S2 and $\zeta = 0.1\ {s}$  in scenarios S3 and S4. The variance of the noise kernel $\sigma_m^2$ is defined as follows in order to exhibit heavy-tailed properties \cite{Li23AAt}, in scenarios S1 and S2,  
			\begin{equation}
				\sigma_m^2=
				\left\{ \begin{array}{l}
					25\ m^2,\ \text{with  probability\ } 95\% \\
					250\ m^2,\ \text{with probability\ }5\% \\
				\end{array} \right. 
			\end{equation}
			and in scenarios S3 and S4,  
			\begin{equation}
				\sigma_m^2=
				\left\{ \begin{array}{l}
					0.25\ m^2,\ \text{with  probability\ } 95\% \\
					2.5\ m^2,\ \text{with probability\ }5\% \\
				\end{array} \right. 
			\end{equation}
		\end{itemize}

		\begin{figure}[!htbp]
			\centering
			\includegraphics[width=0.92\linewidth]{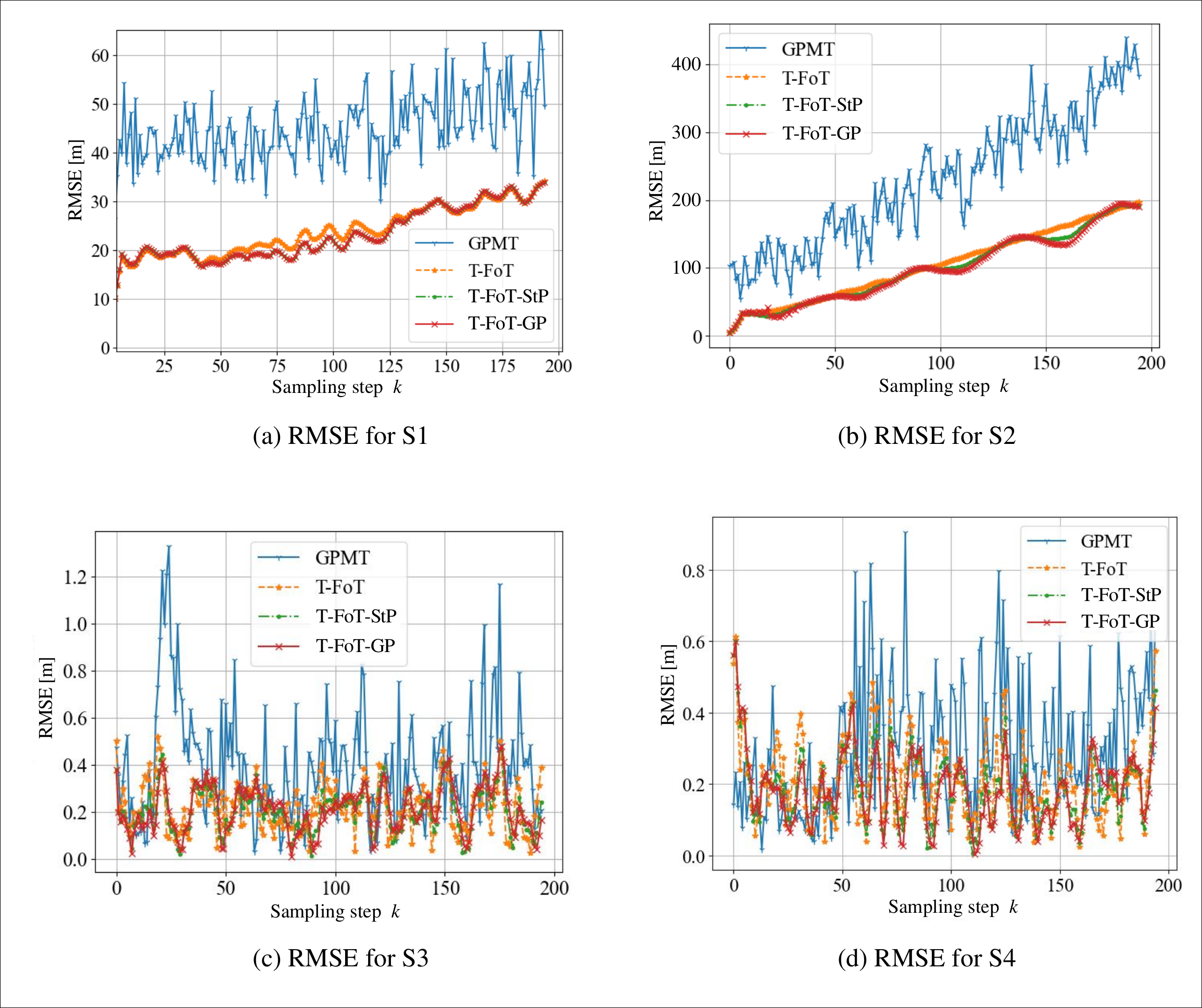}
			\caption{RMSE of 100 trials in the presence of GP colored data noise.}
			\label{fig:RMSE}
		\end{figure}
		

		\subsection{Results}
		\label{sec:simulation results}
		\subsubsection{GP Colored Noise}
		The first series of simulations are conducted in the presence of GP measurement noise. The trajectory estimate, RMSE, ARMSE are given in Figs. \ref{fig:Result}, \ref{fig:RMSE}, and Table \ref{tab:Aver.RMSE_comparison_GP_noise}, respectively.
		As shown, the proposed T-FoT-GP and T-FoT-StP which achieve comparable performance outperform both comparison methods in all scenarios, validating the efficacy of our proposed SP-for-tracking framework. In particular, the superiority of all three T-FoT fitting-based approaches is pronounced in comparison with the GPMT in scenarios S1 and S2 using simple smooth trajectory suitable for polynomial fitting. This confirms the fact that it is overly restrictive and unrealistic to assume stationary and zero mean TSP. DSD is more preferable in these cases when the trajectory has a significant deterministic trend. 
		Moreover, the GPMT displayed significant sensitivity to data size $d$, which was solved through trial-and-error \cite{aftab2020learning}. We tested and used the best one for different scenarios, that is, $d_{S1}=8$, $d_{S2}=8$, $d_{S3}=9$, and $d_{S4}=10$. 
		Although some methods 
		have been given for automatic parameter selection \cite{snelson2005sparse,liu2020gaussian}, $d$-sensitivity remains an open problem. In summary, both T-FoT-GP and T-FoT-StP enhance the algorithm stability and accuracy by decomposing the deterministic and stochastic components, mitigate the potential underfitting from data variability by using an appreciate sliding time window, and compensate the residual fitting error by taking into account the temporal correlation of the data. 
		
		\begin{table}[!h]
			\caption{ARMSE [m] of different trackers in the presence of GP colored measurement noise} 
			\centering
			\begin{tabular}{ c | c c c c }
				\toprule
				\textbf{Tracker}               & S1      & S2     & S3         & S4 \\
				\noalign{\smallskip}\hline\noalign{\smallskip}
				\textbf{GPMT}               & 45.212      & 322.402      & 0.4564         & 0.3837 \\ \noalign{\smallskip}\hline\noalign{\smallskip}
				\textbf{T-FoT}              & 26.846       & 199.678      & 0.2871         & 0.1951 \\ \noalign{\smallskip}\hline\noalign{\smallskip}
				\textbf{T-FoT-GP}           & 24.113       & 177.151      & 0.1964        & 0.1565 \\
				\noalign{\smallskip}\hline\noalign{\smallskip}
				\textbf{T-FoT-StP}          & 24.849       & 187.715      & 0.2096        & 0.1682 \\ \noalign{\smallskip}
				\bottomrule
			\end{tabular}
			\label{tab:Aver.RMSE_comparison_GP_noise}
		\end{table}
		
		\subsubsection{Heavy-tailed Colored Noise}
		The second series of simulations are conducted 
		in the presence of aforementioned heavy-tailed colored noise. 
		Note that the same parameter $d$ as in the first set of simulations is used in the GPMT, which is not guaranteed to be the best. For T-FoT-StP, the DoF parameter is set to $\nu = 5$ which is kept constant for simplicity. 
		The RMSE and ARMSE are given in Fig. \ref{fig:RMSE_heavy-tailed} and Table \ref{tab:Aver.RMSE_comparison_heavy-tailed}, respectively. As shown, the proposed T-FoT-GP and T-FoT-StP show significant advantages again in all scenarios. 
		In particular, 
		our proposed methods maintain stable tracking accuracy during the steady phases of the target trajectory and even when the target maneuvers are taken. 
		In contrast, the performance of GPMT and T-FoT deteriorates more or less in the latter case.

		\begin{figure}[!htbp]
			\centering
			\includegraphics[width=0.92\linewidth]{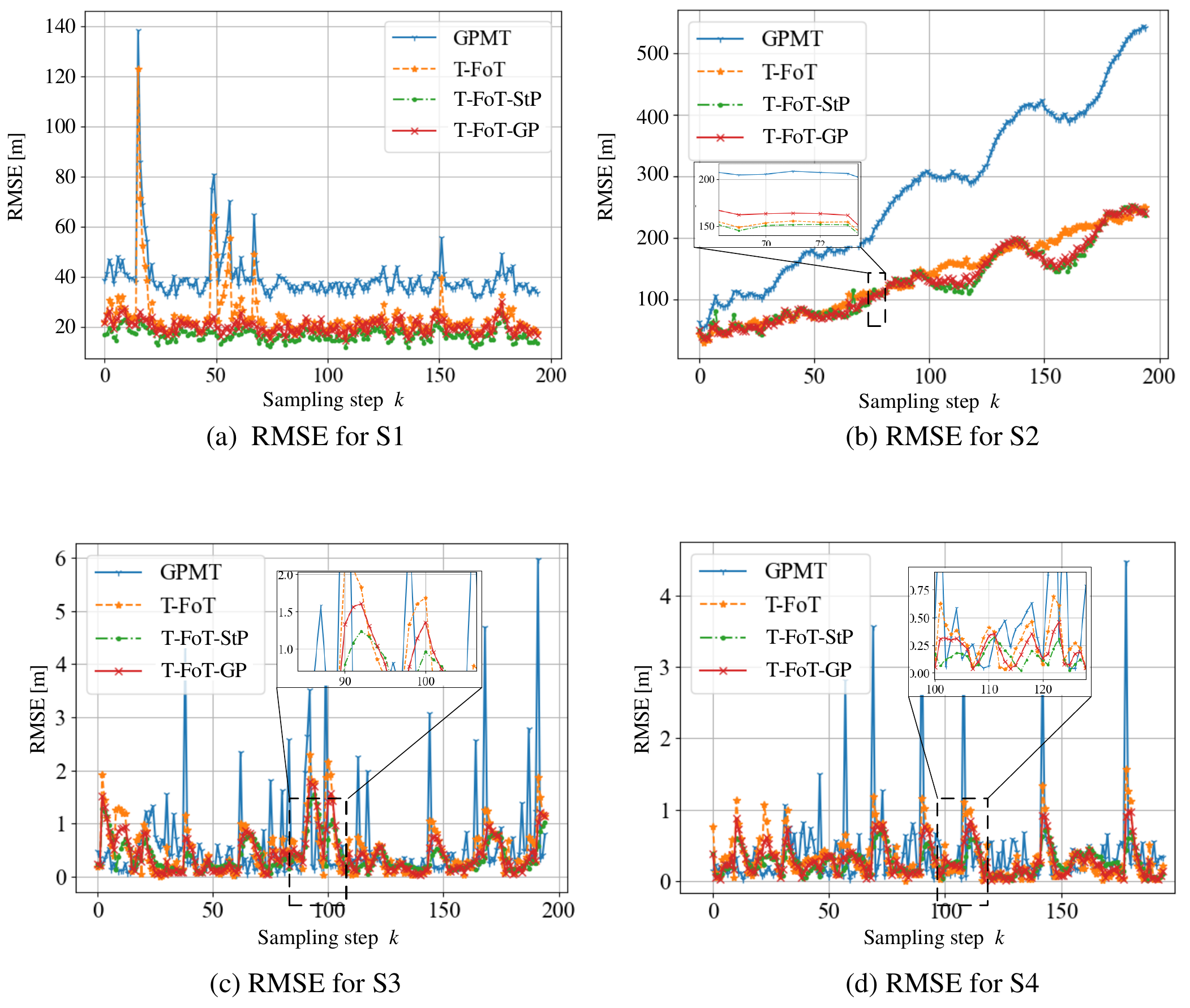}
			\caption{RMSE of 100 trials in the presence of heavy-tailed colored measurement noise.}
			\label{fig:RMSE_heavy-tailed}
		\end{figure}
		
		Moreover, 
		the T-FoT-StP demonstrates even more superior performance in each scenario 
		in comparison with T-FoT-GP and the others. 
		This can be attributed to two fundamental characteristics inherent in the StP. First, its heavy-tailed distribution enables more effective accommodation of these outliers caused by occasionally significant noise. 
		Secondly, its self-similarity and long-memory properties allow for superior capture of temporal patterns in complex dynamical environments.

		\begin{table}[!h]
			\caption{ARMSE [m] of different trackers in the presence of heavy-tailed colored measurement noise} 
			\centering
			\begin{tabular}{ c | c c c c }
				\toprule
				\textbf{Tracker}               & S1      & S2     & S3         & S4 \\
				\noalign{\smallskip}\hline\noalign{\smallskip}
				\textbf{GPMT}               & 49.181      & 359.108      & 0.768         & 0.657 \\ \noalign{\smallskip}\hline\noalign{\smallskip}
				\textbf{T-FoT}              & 28.711       & 208.398      & 0.495        & 0.586 \\ \noalign{\smallskip}\hline\noalign{\smallskip}
				\textbf{T-FoT-GP}           & 25.953       & 189.395      & 0.358        & 0.473 \\
				\noalign{\smallskip}\hline\noalign{\smallskip}
				\textbf{T-FoT-StP}          & 21.469       & 176.893      & 0.287        & 0.295 \\ \noalign{\smallskip}
				\bottomrule
			\end{tabular}
			\label{tab:Aver.RMSE_comparison_heavy-tailed}
		\end{table}
		

		\section{Conclusion}
		\label{sec:Conclusion}
		We propose to model the target states in time series, as well as the measurement noise, as an SP and reformulate the tracking problem as an online trajectory SP fitting and learning paradigm. A flexible DSD-based SP modeling and learning framework has been presented, which consists of two sequential stages. In the first stage, a deterministic function is employed to fit the deterministic trend of the trajectory. In the second stage, a novel SP is used to estimate the residual fitting error. This SP can be approximated by either a GP or an StP. The proposed method yields a Markov-free data-driven tracking approach capable of generating continuous-time trajectories with minimal prior knowledge of the target dynamics. 
		Additionally, the proposed method models and learns not only the temporal correlation of states in the time series but also that of the measurement noises using GP or StP. 
		This further enhances the robustness and compatibility of the approach in complex noisy environments. The effectiveness and advantages of the proposed approach have been demonstrated in diverse simulation environments: from simple target motion patterns to highly maneuverable trajectory, spanning both large- and small-scale scenarios while accommodating outlier noise. Our future work will consider the extension of the DSD-based SP-for-tracking framework to the multisensor system \cite{Jingyuan2025GP-AA} and to the nonlinear and even non-fully observable measurement model. 
		
		

		\appendix
		\label{sec:appendix}
		\subsection{Proof of Lemma \ref{lemma_linear_Function_GP} } \label{appendx:prof_lemma_linear_Function_GP}
		The mean function of the pseudo measurement SP $g(k)$ is calculated as 
		\begin{align}
			\mathbb{E}[g(k)]
			&=\mathbb{E}[{\mathbf{H}} \epsilon(k) ] \nonumber  \\
			&={\mathbf{H}} \mathbb{E}[\epsilon(k) ] \nonumber  \\
			&= {\mathbf{H}} m(k) \label{eq:GP_linearTrasf_mean}
		\end{align}
		The covariance function of $g(k)$ is 
		\begin{align}
			\kappa_g(k, k')
			&=\mathbb{E}[(g(k)-\mathbb{E}[g(k)])(g(k')-\mathbb{E}[g(k')])^{\top}] \nonumber  \\
			&={\mathbf{H}}\mathbb{E}[(\epsilon(k)-m(k))(\epsilon(k')-m(k'))] {\mathbf{H}}^{\top}\nonumber  \\
			&={\mathbf{H}}\kappa(k, k'){\mathbf{H}}^{\top} \label{eq:GP_linearTrasf_cova}
		\end{align}
		
		\subsection{Proof of Lemma \ref{lemma_GP-mt}} \label{appendx:prof_lemma_GP-mt} 
		The mean function of $\epsilon(k)$ is calculated as 
		\begin{align}
			\mathbb{E}[\epsilon(k)]
			&=\mathbb{E}[f(k)-m(k)] \nonumber \\
			&=F(t;\mathbf{C})-m(k) \nonumber  \\
			&= \mathbf{0} \label{eq:mean_f-m=0}
		\end{align}
		The covariance function of $\epsilon(k)$ is 
		\begin{align}
			\mathbb{C}_\epsilon(k, k') 
			&=\mathbb{E}[(\epsilon(k')-\mathbb{E}[\epsilon(k')])(\epsilon(k)-\mathbb{E}[\epsilon(k)])] \nonumber  \\
			&=\mathbb{E}[(f(k')-m(k'))(f(k)-m(k))]\nonumber  \\
			&=\kappa(k, k')
		\end{align}
		
		\subsection{Proof of Lemma \ref{lemma_GP_plus_GP}}
		\label{appendx:proof_lemma_GP_plus_GP}
		The mean function of $e(t)$ is calculated as 
		\begin{align}
			\mathbb{E}\left[ e \left( t \right)\right]
			&=\mathbb{E}\left[ g \left( t \right) +v\left( t \right) \right]  \nonumber \\
			&=m_{g}\left( t \right) +m_{v}\left( t \right) \label{eq:e-mean_mg+mv}
		\end{align}
		The covariance function of $e(t)$ is 
		\begin{align}
			\kappa_e(t,t')
			=&\mathbb{C}_e(t,t')    \nonumber \\
			=&\mathbb{E} \big[ \left( e\left( t \right)-\mathbb{E}\left[ e\left( t \right) \right] \right) \left( e\left( t' \right)-\mathbb{E}\left[ e\left( t' \right) \right] \right) \big]  \nonumber \\
			=&\mathbb{E}\big[ \left( g \left( t \right) + v\left( t \right)-\left( m_{g}\left( t \right) + m_{v}\left( t \right) \right) \right)  \nonumber \\
			& \left( g \left( t' \right) + v\left( t' \right)-\left( m_{g}\left( t' \right) + m_{v}\left( t' \right) \right) \right) \big] \nonumber \\
			=&\mathbb{E}\big[ \left( g \left( t \right)-m_{g}\left( t \right) + v\left( t \right)-m_{v}\left( t \right) \right) \nonumber \\
			& \left( g \left( t' \right)-m_{g}\left( t' \right) + v\left( t' \right)-m_{v}\left( t' \right) \right) \big]  \nonumber \\
			=&\mathbb{E}\left[ \left( g \left( t \right)-m_{g}\left( t \right) \right) \left( g \left( t' \right)-m_{g}\left( t' \right) \right) \right] \nonumber \\
			& + \mathbb{E} \big[\left( v\left( t \right)-m_{v}\left( t \right) \right) \left( v\left( t' \right)-m_{v}\left( t' \right) \right) \big]  \label{eq:e-cov_kg+kv_independence} \\
			=&\kappa_{g}\left( t,t' \right) + \kappa_{v}\left( t,t' \right) \label{eq:e-cov_kg+kv}
		\end{align}
		where independence between $g(\cdot)$ and $v(\cdot)$ was used in \eqref{eq:e-cov_kg+kv_independence}. 
		
		\subsection{Proof of Lemma \ref{lemma_StP_plus_StP}}
		\label{appendx:proof_lemma_StP_plus_StP}
		The covariance function of $e(t)$ is 
		\begin{align}
			\kappa _e\left( t,t' \right) &=\frac{\nu _e\left( t \right) -2}{\nu _e\left( t \right)}\mathbb{C}\left( e\left( t \right) ,e\left( t' \right) \right) \nonumber \\
			&=\frac{\nu _e\left( t \right) -2}{\nu _e\left( t \right)}\left( \mathbb{C}\left( g\left( t \right) ,g\left( t' \right) \right) +\mathbb{C}\left( v\left( t \right) ,v\left( t' \right) \right) \right)  \label{eq:e-cov_kg+kv_StP-independence}\\
			&=\frac{\nu _e\left( t \right) -2}{\nu _e\left( t \right)}\left( \frac{\nu _g\left( t \right)\kappa _g\left( t,t' \right)}{\nu _g\left( t \right) -2} +\frac{\nu _v\left( t \right)\kappa _v\left( t,t' \right)}{\nu _v\left( t \right) -2} \right) . 
			\nonumber 
		\end{align}
		where independence between $g(\cdot)$ and $v(\cdot)$ was used in \eqref{eq:e-cov_kg+kv_StP-independence}.

\end{document}